%% file: main.tex
\documentclass[11pt]{article}

\usepackage{fullpage}
\usepackage{amsthm,amssymb,amsmath}
\usepackage{hyperref}
\usepackage{cleveref}
\usepackage{aliascnt}

\newtheorem{theorem}{Theorem}

\newaliascnt{definition}{theorem}
\newaliascnt{observation}{theorem}
\newaliascnt{claim}{theorem}

\newtheorem{definition}[definition]{Definition}
\newtheorem{observation}[observation]{Observation}
\newtheorem{claim}[claim]{Claim}
\newenvironment{claimproof}{\noindent {\it Proof.}}{\hfill $\qed$}

\aliascntresetthe{definition}
\aliascntresetthe{observation}
\aliascntresetthe{claim}

\input{macros}

\begin{document}

\title{Computational Hardness of Private Coreset}

\author{Badih Ghazi\\Google Research\\{\small \texttt{badihghazi@gmail.com}}
\and
Crist\'obal Guzm\'an\\Pontificia Universidad Cat\'olica de Chile\\{\small \texttt{crguzman@google.com}}
\and
Pritish Kamath\\Google Research\\{\small \texttt{pritish@alum.mit.edu}}
\and
Alexander Knop\\Google Research\\{\small \texttt{alexanderknop@google.com}}
\and
Ravi Kumar\\Google Research\\{\small \texttt{ravi.k53@gmail.com}}
\and
Pasin Manurangsi\\Google Research\\{\small \texttt{pasin@google.com}}
}
\date{\today}

\maketitle

\begin{abstract}
    We study the problem of differentially private (DP) computation of coreset for the $k$-means objective. For a given input set of points, a coreset is another set of points such that the $k$-means objective for any candidate solution is preserved up to a multiplicative $(1 \pm \alpha)$ factor (and some additive factor). 
    
    We prove the first computational lower bounds for this problem. 
    Specifically, assuming the existence of one-way functions, we show that no polynomial-time 
    $(\eps, 1/n^{\omega(1)})$-DP algorithm can compute a coreset for $k$-means in the 
    $\ell_\infty$-metric for some constant $\alpha > 0$ (and some constant additive factor), even for $k=3$. 
    For $k$-means in the Euclidean metric, we show a similar result but only for $\alpha = \Theta\left(1/d^2\right)$, 
    where $d$ is the dimension.
\end{abstract}

\section{Introduction}

The widespread collection and use of personal data has necessitated rigorous frameworks for preserving individual privacy. 
Differential Privacy (DP), introduced by Dwork et al.~\cite{dwork2006calibrating,DworkKMMN06},
has established 
itself as the gold standard framework, enabling data release while protecting the confidentiality
of users. Roughly speaking, DP requires that the output distributions of the algorithm on two \emph{neighboring} input datasets---those that differ on a single user's input---to be similar. The  similarity is measured by two parameters $\eps, \delta \geq 0$; the standard setting, which will also be our focus, is $\eps = O(1)$ and $\delta = 1/n^{\omega(1)}$, where $n$ denotes the input dataset size (\Cref{def:dp}). 
The DP framework has been successfully  deployed in various large-scale applications, such as those by the U.S.
Census Bureau~\cite{abowd2018us} and Google's RAPPOR~\cite{erlingsson2014rappor}.


Among the myriad of data analysis tasks, clustering remains a fundamental primitive in unsupervised learning. 
Consequently, a significant body of research has been dedicated to designing clustering algorithms that satisfy DP~\cite{blum2005practical,GuptaLMRT10,BalcanDLMZ17,feldman2017coresets,NissimSV16,NissimS18,huang2018optimal,Stemmer20,GhaziKM20,ChangGKM21,GhaziHK0MNV23,TourHS24}. 
While most of these works focus on developing efficient algorithms that output a final set of cluster centers, many of them employ a private \emph{coreset} as a subroutine.

Recall that a coreset is another (possibly weighted) set of points such that the clustering (e.g., $k$-means or $k$-median) objective for any candidate solution is preserved up to a multiplicative $(1 \pm \alpha)$ factor and some additive factor $\pm \beta$ (\Cref{def:coreset}).\footnote{As formalized in \Cref{def:coreset}, the additive error $\beta$ in our notation is normalized so that it always lies in $[0, 1]$, unlike some previous work (e.g., \cite{FeldmanFKN09}) where $\beta$ is unnormalized and can be as large as $n$.}  In non-private settings, coreset computation has been extensively studied with the focus usually on finding a coreset of small size \cite{HarPeledM04,FrahlingS05,HarPeledK07,Chen09,LangbergS10,FeldmanL11,FichtenbergerGSSS13,FeldmanSS20,SohlerW18,BecchettiBC0S19,HuangV20,CohenAddadLSSS22,CohenAddadLSS22,CohenAddadD0SS25}; from this perspective, a coreset can be viewed as a succinct summary of the original dataset. 
Coresets are highly useful in data analysis since they allow ``big data'' to be replaced by a ``small summary'', thereby drastically reducing the computational cost of downstream algorithms. 
Furthermore, since the coreset approximates the cost function for any candidate solution, it enables hyperparameter tuning (e.g., $k$ in $k$-means) and trying multiple heuristics (e.g., different initializations in $k$-means) efficiently, 
all without revisiting the massive original dataset.

In the context of privacy, a coreset offers a distinct advantage besides efficiency: it decouples the privacy cost from the analysis.
Once a private coreset is computed, an analyst can run any clustering algorithm (e.g., $k$-means, $k$-median) on the coreset as many times as desired (for a fixed number of centers), without consuming additional privacy budget. 
Feldman et al.~\cite{FeldmanFKN09} first introduced the notion of private coresets and showed that they can be 
constructed information-theoretically with small errors. In particular, they give\footnote{Strictly speaking, the bound stated in~\cite[Corollary 3.3]{FeldmanFKN09} is $\beta = \tO_{\alpha}\Paren{\frac{k^2 d^2 \log n}{\eps n}}$ and is only for the Euclidean metric. The bound we claim here is by replacing the use of \cite{Chen09} with \cite{CohenAddadSS21}, which works for any metric space with doubling dimension $d$ (including the $\ell_\infty$-metric).} an $\eps$-DP algorithm for computing $k$-median and $k$-means coresets with $\beta = \tO_{\alpha}\Paren{\frac{k d}{\eps n}}$, where $d$ denotes the dimension. Alas, their algorithm is based on the exponential mechanism~\cite{McSherryT07}, which is not efficient; specifically, its running time is exponential in both $k$ and $d$. They also give a more efficient algorithm that removes the exponential dependence on $k$ in the running time, but both the running time and the additive error $\beta$ now become exponential in $d$. Subsequent works have explored different DP coreset algorithms for $k$-means in various settings \cite{GhaziKM20,ChangGKM21,TourHS24}; yet they all have this exponential dependence and efficient construction has remained a major open challenge. 




In this work, we investigate if this computational efficiency barrier is fundamental. Namely,
\begin{quote}\centering\slshape
    Does there exist a polynomial-time DP algorithm that returns\\[-0.5mm]
    a coreset with any approximation factor arbitrarily close to 1?
\end{quote}

We remark that, for private coresets, the requirement of small size is not critical: 
given a privately generated coreset of arbitrary size, one can apply a non-private coreset construction scheme
to compress it as a post-processing step. Due to this, we will not enforce the size restriction in the remainder of the paper. Note that this only strengthens our lower bounds.

\subsection{Our Contributions}

We provide a negative answer to the above question, establishing computational hardness results for private coreset construction.
By reducing from the hardness of generating synthetic data~\cite{UllmanV20}, we show that efficient private coreset
estimation cannot be achieved with standard cryptographic assumptions.

Our main results are the following:
\begin{itemize}
    \item \textbf{\boldmath Hardness in the $\ell_\infty$-metric\footnote{While $k$-means was first studied in the Euclidean metric, it has by now been well studied on other metrics, including $\ell_\infty$ metric. See e.g. \cite{CohenAddadS19} and references therein.}:} We prove that, assuming the existence of one-way functions\footnote{Since we do not deal with this assumption directly, we refrain from defining one-way functions and discussing their importance. Nevertheless, we note that this is a central assumption in Cryptography and detailed discussion can be found in any standard textbook on the topic (e.g. \cite{Goldreich2001}).}, 
    no polynomial-time $(\eps, \delta)$-DP algorithm can construct a coreset for $k$-means in the $\ell_\infty$-metric that 
    achieves an approximation factor $1 \pm \alpha$ for some constant $\alpha > 0$, even for $k = 3$ (\Cref{thm:main-linf}).
    
    \item \textbf{Hardness in the Euclidean  metric:} 
    We extend our analysis to the Euclidean setting and show that for the $\ell_2$-metric, any efficient private coreset 
    algorithm must incur an approximation error that scales inverse polynomially with the dimension. 
    Specifically, we rule out polynomial-time algorithms achieving an approximation factor better than $1 \pm \Theta(1/d^2)$ (\Cref{thm:main-l2}).
\end{itemize}

Our results show a dichotomy: while private coresets exist information-theoretically, 
they are computationally difficult to find under standard hardness assumptions. We further note that without the coreset size restriction, the non-private version of the problem is trivial, as one can simply output the input dataset. Thus, the computational challenges we present here are truly unique to the private setting.

\subsection{Technical Overview}
\label{subsec:overview}

To show computational hardness, we use a reduction from
the hardness of privately generating synthetic data for 3-literal
disjunctions~\cite{UllmanV20}.  Roughly speaking, \cite{UllmanV20} shows that, for 3-literal disjunction queries, it is hard to 
sanitize a dataset even in the case where we ``promise'' that some subset of these queries are 
always evaluated to one (``satisfiable'') in the input and that we only wish to be accurate on these queries. 

Our reduction is somewhat straightforward: We use the same dataset as the above problem (up to scaling)! 
The main property we need to show here is that, if we can find a DP coreset of such a dataset, 
then we can use the coreset to construct DP synthetic dataset for the 3-literal disjunction sanitization. 
Again, this latter step is somewhat straightforward: Given a coreset, we simply round each coordinate to $-1$ (false)
or $+1$ (true) based on their signs. This completes our reduction.

The main challenge however is to show that this is a valid reduction. 
Namely, that the output gives accurate estimates of the satisfiable 3-literal disjunctions. 
It turns out that this can be viewed as a 3-means cost query. 
Namely, suppose that we have a disjunction consisting of $x_{i_1}, x_{i_2}, x_{i_3}$ with signs
$s_{i_1}, s_{i_2}, s_{i_3}$ respectively. 
Then, we can create three centers, where the $j$th center only has the $i_j$ coordinate set to
$2s_{i_j}$ (and other coordinates being zero). 
Assuming that each output point $\tx$ belongs to $\{\pm 1\}^d$, it is simple to verify that the
distance is small only when the clause is satisfied. Notice that the ``gap'' between satisfiable
and unsatisfiable in the $\ell_\infty$-metric is constant (i.e., 3) whereas that of the $\ell_2$-metric depends on $d$ (i.e., $1 + \Theta(1/d)$). 
This is the reason that we get a constant $\alpha$ in \Cref{thm:main-linf} but requires $\alpha$ 
that decreases with $d$ in \Cref{thm:main-l2}. 
This completes the high-level overview of our reduction; in the actual proof, we of course cannot 
 assume that each $\tx$ belongs to $\{\pm 1\}^d$ and a significant effort is required to deal with this issue.

\subsection{Related Work}

Computational lower bounds intrinsic to DP are scarce. Dwork et al.~\cite{dwork2009complexity} established such lower bounds for generating synthetic data  when either the data universe or the number of queries is large (here large means exponential in some underlying size parameter). In particular, they provide a class of queries such that, assuming the existence of one-way functions, no polynomial-time $(\eps, 1/n^{\omega(1)})$-DP can generate a synthetic dataset that closely matches the query values of the input dataset. 
However, the class of queries produced in their arguments is specifically crafted based on a signature scheme, and does not correspond to a naturally studied query class. Subsequently, Ullman and Vadhan \cite{UllmanV20} extended the hardness to a large class of queries, including natural ones such as 2-way marginal queries. Their construction is based on the NP-hardness (via Karp reductions) of constraint satisfaction problems (CSPs), which focuses on discrete domains. From this perspective, our work expands the computational hardness landscape of private algorithms to the case of continuous domains, which we hope will stimulate further studies along this direction.

\section{Preliminaries}

Let $\bone_S \in \{0, 1\}^n$ denote the indicator vector of a set $S \subseteq [n]$; for ease, we
use $\bone_i$ 
for $\bone_{\{i\}}$.
For $v \in \R$, let $\sign(v) \in \{-1, 1\}$ denote the sign of $v$ (where $\sign(0) = 1$); moreover, for $x \in \R^d$, let $\sign(x) \in \{-1, 1\}^d$ denote the coordinate-wise sign of $x$, i.e. $\sign(x) = (\sign(x_1), \dots, \sign(x_d))$.
%

Let $\cX$ be a domain. 
For a query $f: \cX \to [0, 1]$ and a dataset $D = (x_1, \dots, x_n) \in \cX^n$, 
let the linear query $f(D)$ be defined as $f(D) := \frac{1}{n} \sum_{i \in [n]} f(x_i) = \E_{x \sim D}[f(x)]$.

\subsection{Differential Privacy}

We quickly recall the definition of differential privacy (DP) here. Two datasets $D, D' \in \cX^*$ are \emph{neighbors} iff they are of the same size and they differ on a single value, i.e., $D = (x_1, \dots, x_n), D' = (x'_1,\dots,x'_n) \in \cX^n$ and there exists $i^* \in [n]$ such that $x_i = x'_i$ for all $i \ne i^*$.

\begin{definition}[Differential Privacy \cite{dwork2006calibrating,DworkKMMN06}] \label{def:dp}
    Let $\eps \geq 0, \delta \in [0, 1]$.
    An algorithm $\alg: \cX^* \to \cO$ is \emph{$(\eps, \delta)$-differentially private (i.e., $(\eps, \delta)$-DP)} iff, for any two neighboring datasets $D, D'$ and for any subset $S \subseteq \cO$ of outputs, we have $$\Pr[\alg(D) \in S] \le e^\eps \cdot \Pr[\alg(D') \in S] + \delta.$$
\end{definition}

\subsection{Synthetic Data Generation}

As alluded to above, both the starting point of our reduction and the coreset problem itself can be viewed as synthetic data generation problems. An algorithm for generating a synthetic data is also referred to as a ``sanitizer''. More formally, a \emph{sanitizer} $\alg$ is a randomized algorithm that takes in the dataset $D \in \cX^*$ and
outputs another dataset $\tD \in \cX^*$.


\subsubsection{Accuracy and Efficiency}

The accuracy of the output dataset is often measure against some family $\cF$ of linear queries $f: \cX \to [0, 1]$; the following is a standard definition used in literature (e.g.~\cite{dwork2009complexity,UllmanV20}).

\begin{definition}[Accuracy] \label{def:acc-standard}
    A dataset $\tD$ is an \emph{$(\cF, \alpha)$-accurate} estimate of $D$ iff 
    $|f(\tD) - f(D)| \leq \alpha$ for all $f \in \cF$.     A sanitizer $\alg$ is \emph{$(\cF, \alpha, n)$-accurate} if for any dataset\footnote{We note that the failure probability is usually parameterized in the accuracy notation but here we fix it to $2/3$ for simplicity, since we are focusing on proving lower bounds. We note that the probability can be easily reduced by employing DP hyperparameter tuning (e.g. \cite{LT19,PapernotS22,GKKKMZ25}).} $D \in \cX^n$, \\ $\Pr_{\tD \sim \alg(D)}[\tD \text{ is an } (\cF, \alpha)\text{-accurate estimate of } D] \geq \frac{2}{3}$.  
\end{definition}

Throughout our work, we will consider the setting where $\cX = \cX_d \subseteq \R^d$ and, thus, the family of queries $\cF_d$ is also parameterized by $d$. We say that the sanitizer runs in polynomial time if it runs in $(nd)^{O(1)}$ time. Furthermore, the accuracy guarantee has to be against \emph{all values of $d \in \N$}. In this setting, we define a \emph{parameterized family} of queries to be a family $\cF = \bigcup_{d \in \N} \cF_d$ where $\cF_d$ is the family of queries corresponding to $\cX_d \subseteq \R^d$. We can define the accuracy for a parameterized family of queries as follows.

\begin{definition}[Accuracy w.r.t. Parameterized Family of Queries] \label{def:acc-parameterized}
For a parameterized family of queries $\cF = \bigcup_{d \in \N} \cF_d$ where $\cF_d$ is the family of queries corresponding to $\cX_d \subseteq \R^d$, we say that a sanitizer $\alg$ is $(\cF, \alpha)$-accurate if the following holds: There exists a constant $\nu > 0$ such that, for all $d \in \N$ and $n \geq \Theta(d^\nu)$, $\alg$ is $(\cF_d, \alpha, n)$-accurate.
\end{definition}



%

\subsubsection{Promise Sanitizer}

To prove our results, we will use the hardness result from~\cite[Theorem 4.4]{UllmanV20},  
which actually holds even against a weaker notion which we name \emph{promise sanitizers}. In words, promise sanitizers only provide the accuracy guarantee on the queries $f \in \cF$ whose values are one on the entire input dataset. For the queries that violate this condition, promise sanitizers do not provide any accuracy guarantee. This is formalized below.
 
\begin{definition}[Promise sanitizer] \label{def:prom-sanitizer}
    Let $\cF$ be a class of queries and $\gamma \in [0, 1]$.  $\alg$ is an  \emph{$(\cF, \gamma, n)$-promise sanitizer} if
    $\Pr_{\tD \sim \alg(D)}[f(\tD) \geq \gamma \text{ for all } f \in \tcF] \geq 2 / 3$, 
    for any dataset $D \in \cX^n$ and $\tcF \subseteq \cF$ such that $f(D) = 1$ for all $f \in \tcF$.

    Similar to \Cref{def:acc-parameterized}, for a parameterized family of queries $\cF = \bigcup_{d \in \N} \cF_d$, we say that $\alg$ is an $(\cF, \gamma)$-promise sanitizer if there exists a constant $\nu > 0$ such that, for all $d \in \N$ and $n \geq \Theta(d^\nu)$, $\alg$ is an $(\cF_d, \gamma, n)$-promise sanitizer.
\end{definition}
Note that a $(\cF, 1 - \gamma)$-accurate sanitizer is a $(\cF, \gamma)$-promise sanitizer. In other words, a promise sanitizer is a weaker notion compared to an accurate sanitizer. While Ullman and Vadhan~\cite{UllmanV20} only stated their results in term of an accurate sanitizer, we observe that their proof of~\cite[Theorem 4.4]{UllmanV20} already yields the hardness for a promise sanitizer as well\footnote{See Appendix~\ref{app:promise-sanitizer} for more detail.}.


To state their result, let $\cFthreesat_d$ denote the class of all 3-literal disjunctions where the domain is $\{\pm 1\}^d$ 
(where we think of $-1$ as ``false'' and $+1$ as ``true''). More formally, for every (literal indices) $\bi = (i_1, i_2, i_3) \in [d]^3$ and (signs) $\bs = (s_1, s_2, s_3) \in \{\pm 1\}^3$, we define $\psi_{\bi, \bs}: \{\pm 1\}^d \to \{\pm 1\}$ as follows:
\begin{align*}
\psi_{\bi, \bs}(x) = 1 - \frac{1}{4}\cdot (1 - s_1 x_{i_1})(1 - s_2 x_{i_2})(1 - s_3 x_{i_3})
\end{align*}
In other words, this is the evaluation of the 3-literal disjunction $s_1 x_{i_1} \vee s_2 x_{i_2} \vee s_3 x_{i_3}$ on $x$. We then let $\cFthreesat_d := \{\psi_{\bi, \bs} \mid \bi \in [d]^3, \bs \in \{\pm1\}^3\}$, and let $\cFthreesat$ be the parameterized family $\bigcup_{d \in \N} \cFthreesat_d$. The hardness result we need is as follows.

\begin{theorem}[\cite{UllmanV20}] \label{thm:3sat-hardness}
    Let $\eps > 0$ be any constant. Assuming the existence of one-way functions, there is no polynomial-time 
    $(\eps, 1/n^{\omega(1)})$-DP $(\cFthreesat, \gamma)$-promise sanitizer for some constant $\gamma \in (0, 1)$.
\end{theorem}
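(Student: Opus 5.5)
The plan is to re-run the argument behind \cite[Theorem 4.4]{UllmanV20} and observe that it only ever uses accuracy on queries that evaluate to one on every row of the hard dataset. So a promise sanitizer is enough.

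\textbf{Construction.} First, from one-way functions I obtain a digital signature scheme that is existentially unforgeable under chosen-message attack, with keys $(\mathsf{sk}, \mathsf{vk})$. Next I take the verification predicate $x = (m, \sigma) \mapsto \mathsf{Ver}_{\mathsf{vk}}(m, \sigma)$. I feed it through a PCP-style Karp reduction to a 3-CSP whose constraints are 3-literal disjunctions $\psi_{\bi,\bs} \in \cFthreesat_d$, where $d = \mathrm{poly}(\kappa)$. This reduction must have the following properties:
\begin{itemize}
\item an efficient witness map $W$, so that every valid pair $(m,\sigma)$ becomes an assignment $W(m,\sigma) \in \{\pm1\}^d$ satisfying \emph{all} constraints;
\item a constant soundness gap $\gamma$ together with an efficient decoder (proof of knowledge): any assignment satisfying at least a $\gamma$ fraction of the constraints can be decoded into a valid pair $(m', \sigma')$.
\end{itemize}
The hard dataset is $D = (W(m_1,\sigma_1), \dots, W(m_n,\sigma_n))$ for random messages $m_j$, and $\tcF$ is the set of constraints of the CSP. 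By construction $f(D) = 1$ for every $f \in \tcF$, so the promise in \Cref{def:prom-sanitizer} is satisfied. Hence with probability $2/3$ the sanitizer outputs $\tD$ with $f(\tD) \ge \gamma$ for all $f \in \tcF$.

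\textbf{Decoding a signature.} Averaging over the constraints and then over the rows of $\tD$ shows that some row satisfies at least a $\gamma$ fraction of the constraints. Applying the decoder to each row, I efficiently obtain a valid pair $(m',\sigma')$ with probability at least $2/3$.

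\textbf{Contradiction.} There are two cases. If $m' \notin \{m_1,\dots,m_n\}$ with noticeable probability, the algorithm is a forger: it uses only $n$ chosen-message signatures, runs in polynomial time, and outputs a fresh valid pair. This contradicts unforgeability. Otherwise the output almost always ``reveals'' some $m_j$ in the dataset. I pick $j^*$ with $\Pr[m' = m_{j^*}] \ge 2/(3n) - \mathrm{negl}$. On the neighboring dataset $D'$, row $j^*$ is replaced by an encoding of a fresh signature. There $m_{j^*}$ is a random message never seen by the algorithm, so outputting it has probability at most negligible plus the forgery probability, which is also negligible. Since $2/(3n) > e^\eps \cdot \mathrm{negl} + 1/n^{\omega(1)}$, this violates $(\eps, 1/n^{\omega(1)})$-DP. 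The requirement $n \ge \Theta(d^\nu)$ is met by letting $n$ be a suitable polynomial in $\kappa$.

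\textbf{Main obstacle.} The delicate step is the CSP encoding. I need a 3-CSP where each constraint is a 3-literal disjunction and which simultaneously has (i) perfect completeness via an efficient witness map, and (ii) constant-gap soundness with efficient extraction of a valid witness from any $\gamma$-satisfying assignment. These are exactly the properties \cite{UllmanV20} assembles from the PCP theorem with proofs of knowledge, followed by a standard reduction to 3-clauses. My check would be that the Ullman--Vadhan analysis only invokes the sanitizer's guarantee on these all-satisfied constraints, which is the content of Appendix~\ref{app:promise-sanitizer}.
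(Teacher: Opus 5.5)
Your proposal is correct and follows essentially the same route as the paper's sketch in Appendix~\ref{app:promise-sanitizer}. In both, PCP (Levin-reduction) encodings of valid signatures form a dataset satisfying every constraint, averaging yields a row satisfying a $\gamma$ fraction of them, decoding that row gives a valid signature, and the forgery-versus-DP-violation dichotomy finishes the argument. The one difference is technical: you use a standard EUF-CMA scheme on random messages, as in \cite{dwork2009complexity}, so the privacy case follows because the removed message is independent of the neighbor's view. The paper instead uses a no-message super-secure scheme, where that case relies on strong unforgeability.
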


\subsection{Coreset Estimation}
Let $\bB_{p,d}$ denote a unit ball in the $\ell_p$-metric on $\R^d$; i.e., $\bB_{p,d} := \{x \in \R^d ~:~ \|x\|_p \leq 1\}$.
In the $\ell_p$-coreset problem, we consider the domain $\cX = \bB_{p,d}$. 
Let $\cFcoreset{p,d}{k}$ denote the family of coreset cost functions; i.e., 
$\cFcoreset{p,d}{k} = \{\cost_{\cC, p}: \bB_{p,d} \to [0, 1] ~\mid~ \cC \in \bB_{p,d}^k\}$,
where $\cost_{\cC, p}(x) = \min_{c \in \cC} \|x - c\|_p^2$. Finally, we let $\cFcoreset{p}{k}$ be the parameterized family $\bigcup_{d \in \N} \cFcoreset{p,d}{k}$.

%
\begin{definition}[$\ell_p$-coreset] \label{def:coreset}
    A dataset $\tD \in \bB_{p,d}^*$
     is a \emph{$(k, p, \alpha, \beta)$-coreset}
    of $D \in \bB_{p,d}^*$ if\footnote{We use the same notation as linear queries: For any dataset $D$, let $\cost_{\cC, p}(D) = \E_{x \sim D} [\cost_{\cC, p}(x)]$}
    \[
        (1 - \alpha) \cdot \cost_{\cC, p}(D) - \beta \leq \cost_{\cC, p}(\tD) \leq (1 + \alpha) \cdot \cost_{\cC, p}(D) + \beta,
    \]
    for all $\cC \in \bB_{p,d}^k$.
    An algorithm $\alg$ is a \emph{$(k, p, \alpha, \beta, d, n)$-coreset estimator} iff for any dataset $D \in \bB_{p,d}^n$, it holds that 
    \[
        \Pr_{\tD \sim \alg(D)}[\tD \text{ is a } (k, p, \alpha, \beta)\text{-coreset of } D] \ge 2 / 3.
    \]

    Similar to \Cref{def:acc-parameterized}, we say that $\alg$ is a \emph{$(k, p, \alpha, \beta)$-coreset estimator} if there exists a constant $\nu > 0$ such that, for all $d \in \N$ and $n \geq \Theta(d^\nu)$, $\alg$ is a $(k, p, \alpha, \beta, d, n)$-coreset estimator.
\end{definition}


As explained in the introduction, we do not require that the private coreset be small. We also note that, unlike standard notations of coreset, we also require the coreset to be unweighted for notational convenience. Since there is no restriction on the coreset size and we allow an additive error $\beta$, this is w.l.o.g. because we can always rescale the weights by a large factor (e.g. $O(|\tD|/\beta)$) and round them to integers to obtain an unweighted dataset.

Also, to avoid dealing with two parameters $\alpha, \beta$, it will be more convenient to work with the accuracy definition of the sanitizer (\Cref{def:acc-standard,def:acc-parameterized}) instead. Indeed, it is simple to see that any algorithm that solves $(k, p, \alpha, \beta)$-coreset is also an accurate sanitizer for the query family $\cFcoreset{p}{k}$, as formalized below.

\begin{observation} \label{obs:coreset-to-sanitizer}
    Any algorithm that is a $(k, p, \alpha, \beta)$-coreset estimator is an 
    $\left(\cFcoreset{p}{k}, 4\alpha + \beta\right)$-accurate sanitizer.
\end{observation}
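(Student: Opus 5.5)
The argument is a direct unpacking of \Cref{def:coreset} against \Cref{def:acc-standard,def:acc-parameterized}. It is enough to show one deterministic fact: whenever $\tD$ is a $(k, p, \alpha, \beta)$-coreset of $D$, it is also a $(\cFcoreset{p,d}{k}, 4\alpha + \beta)$-accurate estimate of $D$. Given that, the probability-$2/3$ event in the coreset-estimator guarantee is contained in the probability-$2/3$ event required of an accurate sanitizer. The constant $\nu$ witnessing that $\alg$ is a $(k,p,\alpha,\beta)$-coreset estimator then also witnesses $(\cFcoreset{p}{k}, 4\alpha+\beta)$-accuracy, for every $d$ and $n \ge \Theta(d^\nu)$.

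For the deterministic fact, fix any $\cC \in \bB_{p,d}^k$. The coreset inequalities rearrange to
\[
|\cost_{\cC,p}(\tD) - \cost_{\cC,p}(D)| \le \alpha \cdot \cost_{\cC,p}(D) + \beta .
\]
It remains to bound $\cost_{\cC,p}(D) \le 4$. Every $x \in D$ and every $c \in \cC$ lie in $\bB_{p,d}$, so the triangle inequality gives $\|x - c\|_p \le 2$. Hence $\cost_{\cC,p}(x) \le 4$, and averaging over $D$ gives $\cost_{\cC,p}(D) \le 4$. Substituting yields $|\cost_{\cC,p}(\tD) - \cost_{\cC,p}(D)| \le 4\alpha + \beta$ for every query in $\cFcoreset{p,d}{k}$, as required.

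The only point needing care is normalization. The paper writes the cost functions as mapping into $[0,1]$, whereas the natural range is $[0,4]$. This does not matter here: the argument never uses the stated range, it only uses the bound $\cost_{\cC,p}(D) \le 4$. That bound is exactly what produces the factor $4$ in front of $\alpha$.
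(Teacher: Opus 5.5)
Your proof is correct and follows the paper's argument: unpack \Cref{def:coreset} to get $|\cost_{\cC,p}(\tD)-\cost_{\cC,p}(D)| \le \alpha\cdot\cost_{\cC,p}(D)+\beta$, then bound the cost by a constant. Your bound $\cost_{\cC,p}(D)\le 4$ (from $\|x-c\|_p\le 2$) is the right one and gives the stated factor $4$; the paper's one-line proof cites the range $[0,2]$, which is a slip, since e.g.\ $x=\bone_1$ and $c=-\bone_1$ already give cost $4$.
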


\begin{proof}
    This follows immediately from the definition of coreset, since $\cost_{\cC, p}(D) \in [0, 2]$.
\end{proof}

\section{\boldmath Hardness in the \texorpdfstring{$\ell_{\infty}$}{L-infty}-metric}

In this section, we prove the hardness of computing DP coreset in the $\ell_\infty$-metric:
\begin{theorem} \label{thm:main-linf}
    Let $\eps > 0$ be any constant. Assuming the existence of one-way functions, there is no polynomial-time 
    $(\eps, 1/n^{\omega(1)})$-DP algorithm for the $(3, \infty, \alpha, \beta)$-coreset estimation
    problem for some constant $\alpha, \beta > 0$.
\end{theorem}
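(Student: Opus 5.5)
The plan is a reduction to \Cref{thm:3sat-hardness}. Suppose for contradiction that $\alg$ is a polynomial-time $(\eps, 1/n^{\omega(1)})$-DP $(3, \infty, \alpha, \beta)$-coreset estimator, with $\alpha, \beta$ small constants fixed at the end. By \Cref{obs:coreset-to-sanitizer}, $\alg$ is an $(\cFcoreset{\infty}{3}, \eta)$-accurate sanitizer with $\eta := 4\alpha + \beta$.

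We build a promise sanitizer $\alg'$ for $\cFthreesat$ as follows. Given $D = (x_1,\dots,x_n) \in (\{\pm1\}^d)^n$, scale every point to $x_i/2 \in \{\pm 1/2\}^d \subseteq \bB_{\infty,d}$. Run $\alg$ on the scaled dataset to get $\tD$. Output $\sign(\tD)$, meaning we apply $\sign$ to each point of $\tD$. Since $\alg'$ is $\alg$ applied to a pointwise (hence neighbor-preserving) transformation of the input, followed by post-processing, it is $(\eps, 1/n^{\omega(1)})$-DP. It clearly runs in polynomial time, and the requirement $n \ge \Theta(d^\nu)$ carries over unchanged.

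For accuracy, fix a clause $\psi_{\bi,\bs}$ and define centers $c_j := s_j \bone_{i_j}$ for $j=1,2,3$. These lie in $\bB_{\infty,d}$, and we set $\cC = (c_1,c_2,c_3)$.

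\emph{Input side.} Take a scaled input point $x/2$. If $x_{i_j} = s_j$, then $\|x/2 - c_j\|_\infty = 1/2$, because the $i_j$ coordinate contributes $1/2$ and every other coordinate contributes $1/2$. So whenever $\psi_{\bi,\bs}(D) = 1$, every point satisfies some literal and $\cost_{\cC,\infty}(D/2) = 1/4$.

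\emph{Output side.} This is the key step, since the coreset points need not be Boolean. Let $y \in \bB_{\infty,d}$ be arbitrary, and suppose $\sign(y)$ violates the clause, i.e.\ $\sign(y_{i_j}) \ne s_j$ for all $j$. Then $y_{i_j}$ is either $0$ or has sign opposite to $s_j$; when $y_{i_j}=0$ the convention $\sign(0)=1$ forces $s_j=-1$. In every case $|y_{i_j} - s_j| \ge 1$, so $\|y - c_j\|_\infty \ge 1$ for all $j$ and hence $\cost_{\cC,\infty}(y) \ge 1$. Since costs are nonnegative, $\cost_{\cC,\infty}(\tD) \ge \Pr_{y \sim \tD}[\psi_{\bi,\bs}(\sign(y)) = 0]$.

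Now combine the two sides. With probability at least $2/3$, accuracy holds simultaneously for all such $\cC$. On that event, for every clause with $\psi_{\bi,\bs}(D)=1$,
\[
\Pr_{y\sim\tD}[\psi_{\bi,\bs}(\sign(y))=0] \le \cost_{\cC,\infty}(\tD) \le \tfrac14 + \eta,
\]
so $\psi_{\bi,\bs}(\sign(\tD)) \ge 3/4 - \eta$. Choosing constants with $\eta = 4\alpha+\beta < 3/4$ (say $\alpha=\beta=1/20$) makes $\alg'$ an $(\cFthreesat, \gamma)$-promise sanitizer with constant $\gamma = 3/4-\eta \in (0,1)$.

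One point needs care: \Cref{thm:3sat-hardness} asserts hardness for \emph{some} constant $\gamma$, and I must check that this $\gamma$ can be matched. The argument in \cite{UllmanV20} works for any constant $\gamma>0$ (the promise version is at least as strong). If instead only a specific $\gamma$ is guaranteed, I would shrink $\alpha,\beta$ until $3/4 - 4\alpha - \beta \ge \gamma$, which is possible whenever $\gamma < 3/4$. I expect this bookkeeping of constants, together with handling non-Boolean coreset points, to be the only delicate part; the constant cost gap (1 versus $1/4$) in $\ell_\infty$ handles the latter directly.

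Under this choice, the existence of $\alg$ contradicts \Cref{thm:3sat-hardness}, which proves \Cref{thm:main-linf}.
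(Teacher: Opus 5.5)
Your reduction algorithm matches the paper's (scale by $1/2$, run the coreset algorithm, round with $\sign$). The analysis, however, has a genuine gap. Using only the clause centers $\cC = (s_1\bone_{i_1}, s_2\bone_{i_2}, s_3\bone_{i_3})$, you get $\psi(\sign(\tD)) \ge 3/4 - \eta$, which can never exceed $3/4$. But \Cref{thm:3sat-hardness} only supplies \emph{some} constant $\gamma \in (0,1)$.

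Your assertion that the hardness holds for every constant $\gamma > 0$ is false. Outputting $\tD = \{\mathbf{1}, -\mathbf{1}\}$ is a data-independent (hence $0$-DP), polynomial-time $(\cFthreesat, 1/2)$-promise sanitizer, because every clause contains a positive or a negative literal. Moreover, the $\gamma$ produced by the argument of \cite{UllmanV20} is the soundness threshold of a Max-3SAT PCP. For the standard PCPs, whose clauses are on three distinct variables, this threshold must exceed $7/8$, since a random assignment satisfies $7/8$ of such clauses. So your fallback condition $\gamma < 3/4$ is not available.

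The $3/4$ cap is also not an artifact of loose inequalities. Suppose all $s_j = -1$, and let $\tD$ have $3/4$ of its mass at $c_1$ (cost $0$, satisfies $\psi$) and $1/4$ at $\bzero$ (cost $1$; $\sign(\bzero) = \mathbf{1}$ violates $\psi$). Then $\tD$ has exactly the same $\cC$-cost $1/4$ as the scaled input, yet only $3/4$ of its points satisfy $\psi$.

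The missing idea is the extra query $\cC^{\bzero} = \{\bzero\}$. It penalizes output mass placed near the clause centers, which have norm $1$. For any $y \in \bB_{\infty,d}$, let $j$ be the center attaining the minimum in $\cost_{\cC,\infty}(y)$. Then $\cost_{\cC^{\bzero},\infty}(y) + \cost_{\cC,\infty}(y) \ge y_{i_j}^2 + (s_j - y_{i_j})^2 \ge 1/2$. When $\sign(y)$ violates $\psi$, the sum is at least $1$ by your own bound. Every scaled input point has sum exactly $1/4 + 1/4 = 1/2$.

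Applying accuracy to both queries gives $\tfrac12 + \tfrac12 \Pr_{y \sim \tD}[\psi(\sign(y)) = 0] \le \tfrac12 + 2\eta$. Hence $\psi(\sign(\tD)) \ge 1 - 4\eta$, which exceeds any fixed $\gamma < 1$ once $4\alpha + \beta \le (1-\gamma)/4$. This is precisely the paper's argument (\Cref{claim:linf-ub-sat,claim:linf-lb-unsat}).
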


As alluded to earlier, it is more convenient to prove the hardness in terms of sanitizer, as stated in \Cref{thm:linf-sanitizer} below. 
Due to \Cref{obs:coreset-to-sanitizer}, \Cref{thm:main-linf} follows as an immediate corollary of \Cref{thm:linf-sanitizer}.

\begin{theorem} \label{thm:linf-sanitizer}
    Let $\eps > 0$ be any constant. Assuming the existence of one-way functions, there is no polynomial-time $(\eps, 1/n^{\omega(1)})$-DP 
    $(\cFcoreset{\infty}{k}, \xi)$-accurate sanitizer for $k = 3$ and some constant $\xi \in[0,1)$.
\end{theorem}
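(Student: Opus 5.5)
We argue by contradiction, reducing to \Cref{thm:3sat-hardness}. Suppose $\alg$ is a polynomial-time $(\eps,\delta)$-DP $(\cFcoreset{\infty}{3},\xi)$-accurate sanitizer with $\xi$ a sufficiently small constant. We build a promise sanitizer $\alg'$ for $\cFthreesat$ as follows. Given $D \in \{\pm1\}^{d,n}$, view it as a dataset in $\bB_{\infty,d}$; it already lies in the unit $\ell_\infty$ ball, so no rescaling is needed. Run $\alg$ to get $\tD$, and output $\sign(\tD)$, the coordinate-wise sign of each point. Privacy is inherited by post-processing, and the running time is polynomial. It remains to show that with probability $2/3$ every promised clause $\psi_{\bi,\bs}$ (with $\psi_{\bi,\bs}(D)=1$) satisfies $\psi_{\bi,\bs}(\sign(\tD))\ge\gamma$ for some constant $\gamma>0$.

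For a clause $(\bi,\bs)$, use the centers $c_j = s_j \bone_{i_j}$ for $j=1,2,3$. These lie in $\bB_{\infty,d}$, whereas the overview's value $2s_j$ does not, so we use norm-$1$ centers. For $x\in\{\pm1\}^d$, we have $\|x-c_j\|_\infty = 1$ if $x_{i_j}=s_j$ and $2$ otherwise. Hence $\cost(x)=1$ if the clause is satisfied and $4$ otherwise. This exceeds $1$, but we rescale the domain or centers by $1/2$, so that points lie in $\frac12\{\pm1\}^d$, centers are $\frac12 s_j\bone_{i_j}$, and costs become $1/4$ versus $1$. So on a promised clause, $\cost_{\cC}(D)=1/4$, and accuracy gives $\cost_{\cC}(\tD)\le 1/4+\xi$.

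The main obstacle is that the points $\tx\in\tD$ are arbitrary in the ball, not in $\frac12\{\pm1\}^d$. The $\ell_\infty$ cost of $\tx$ to $c_j$ is at least $|\tx_{i_j}-\frac12 s_j|^2$. If $\sign(\tx)$ falsifies the clause, then $s_j\tx_{i_j}\le 0$ for every $j$ (up to the tie convention at $0$, which we handle by the inequality direction), so each distance is at least $1/2$ and $\cost(\tx)\ge 1/4$. This alone gives no gap. We fix this with a second query that penalizes points far from the cube. Take $\cC'=\{0\}$ with cost $\|\tx\|_\infty^2$: on $D$ it equals $1/4$ exactly, and on $\tD$ it is at most $1/4+\xi$, which is only a one-sided check. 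The plan is instead to first show that most $\tx$ have large $|\tx_i|$ on the relevant coordinates, using centers shifted toward the opposite sign. For instance, with three centers $c_j=-\frac12 s_j\bone_{i_j}+$ adjustments, a falsifying $\tx$ near $0$ on $i_j$ has cost around $1/4$, but a true falsifying point has cost $0$. I expect the cleanest route to be combining the satisfied-clause query with a ``distance to cube'' lower bound. Concretely, pick centers $c_j=s_j\bone_{i_j}$ at full norm in the unscaled ball, with the data scaled by $1/2$. A satisfied data point at $\frac12$ has cost $1/4$. Any $\tx$ with $\sign(\tx)$ falsifying has $s_j\tx_{i_j}\le0$, so its distance to each $c_j$ is at least $1$ and its cost is at least $1$. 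Also every $\tx$ has cost at least $0$.

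Averaging then gives the bound. Let $p$ be the fraction of $\tD$ whose sign falsifies the clause. Then $p\cdot1\le\cost(\tD)\le 1/4+\xi$, so $p\le 1/4+\xi$ and $\psi(\sign(\tD))\ge 3/4-\xi\ge\gamma$ for $\xi<3/4-\gamma$. Since the hardness statement holds for some constant $\gamma\in(0,1)$, we choose $\xi$ accordingly. If the required $\gamma$ exceeds $3/4$, we sharpen the gap by scaling the data to $\lambda\{\pm1\}^d$ with small $\lambda$, which gives costs $(1-\lambda)^2$ versus at least $1$. The remaining care is in checking the constants, the tie-breaking of $\sign(0)=1$ (which only helps, since then $s_j\tx_{i_j}\le 0$ still implies distance at least $1$), and matching the regime $n\ge\Theta(d^\nu)$.
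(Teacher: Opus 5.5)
Your reduction (scale the data by $1/2$, use centers $s_j\bone_{i_j}$, round by $\sign$) is the paper's. The accuracy argument you end with, however, has a genuine gap. You use only the query $\cC^\psi=(s_1\bone_{i_1},s_2\bone_{i_2},s_3\bone_{i_3})$, and you bound the cost of synthetic points whose sign satisfies $\psi$ from below by $0$. This yields only $p\le 1/4+\xi$, i.e. $\psi(\sign(\tD))\ge 3/4-\xi$.

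That bound helps only if the constant $\gamma$ in \Cref{thm:3sat-hardness} is below $3/4$, and you do not get to choose $\gamma$. It is the PCP soundness constant inherited from \cite{UllmanV20}, and it is in general close to $1$: a uniformly random synthetic dataset, which is trivially private, already satisfies about a $7/8$ fraction of any clause on three distinct variables. So you need $\psi(\sign(\tD))\to 1$ as $\xi\to 0$.

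Your proposed repair goes the wrong way. With data in $\lambda\{\pm1\}^d$, a satisfied data point has cost $\max(\lambda,1-\lambda)^2\ge 1/4$, because the untouched coordinates contribute $\lambda$ to the $\ell_\infty$ distance. Small $\lambda$ therefore pushes the satisfied cost toward $1$ and destroys the gap. No choice of $\lambda$ or of the center scale beats the ratio $1/4$. The single query is inherently capped: it cannot rule out a synthetic set with a $3/4$ fraction of points sitting exactly on a center (cost $0$) and a $1/4$ fraction of falsifying points (cost $1$).

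The missing idea is the one you briefly considered and discarded: the origin query $\cC^{\bzero}=\{\bzero\}$. It should not be used alone but summed with $\cC^\psi$, which gives a pointwise lower bound matching the data value exactly. For any $y\in\bB_{\infty,d}$, let $j$ attain $\cost_{\cC^\psi,\infty}(y)$. Then
\[
\cost_{\cC^{\bzero},\infty}(y)+\cost_{\cC^\psi,\infty}(y)\ \ge\ y_{i_j}^2+(s_j-y_{i_j})^2\ \ge\ \tfrac12 .
\]
If $\sign(y)$ falsifies $\psi$, the second term alone is at least $1$. On the scaled data the sum is at most $\tfrac14+\tfrac14=\tfrac12$. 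Accuracy is two-sided for every query, so applying it to both queries gives $\tfrac12+\tfrac p2\le \tfrac12+2\xi$. Hence $\psi(\sign(\tD))\ge 1-4\xi$, which handles every $\gamma\in(0,1)$ with $\xi=(1-\gamma)/4$. This is exactly the content of the paper's \Cref{claim:linf-ub-sat} and \Cref{claim:linf-lb-unsat}.
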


The proof of this theorem follows closely the outline in \Cref{subsec:overview}. Namely, for an input dataset $D$ for the $(\cFthreesat, \gamma)$-promise sanitization problem (where each $x \in D$ belongs to $\{-1,1\}^d$), we simply construct a dataset $D'$ which is $D$ scaled by a factor of $\kappa = \frac{1}{2}$. We then pass $D'$ to the coreset estimator and round it to obtain the final output. 

As explained in \Cref{subsec:overview}, for each $\psi \in \cFthreesat$, we can construct a 3-means query--where the centers are denoted by $\cC^{\psi}$ in the proof below--to check if $\psi(D) = 1$. Since we do not have any guarantee that the coreset only contains points in $\{-\kappa, \kappa\}^d$, we additionally consider the origin as a center to prevent any ``cheating''. Intuitively, if some coordinates have absolute values larger than $\kappa$, then its distance to the origin would be too large. Otherwise, if the absolute values are smaller than $\kappa$, then its distance to $\cC^{\psi}$ is too large. This is argued formally below in \Cref{claim:linf-lb-unsat}.

\begin{proof}[Proof of \Cref{thm:linf-sanitizer}]
    We prove this by a reduction from the $(\cFthreesat, \gamma)$-promise sanitization problem, 
    which is known to be hard from \Cref{thm:3sat-hardness}.  
    Assume for the sake of contradiction that there exists a polynomial-time $(\eps, 1/n^{\omega(1)})$-DP $(\cFcoreset{\infty}{k}, \xi)$-sanitizer $\alg$, where $\xi = \frac{1 - \gamma}{4}$.
    
    We construct the $(\cFthreesat, \gamma)$-promise sanitizer $\newalg$ as follows.  Let $\kappa = \frac{1}{2}$.
    \begin{itemize}
        \item Let $D \in (\{ \pm 1\}^d)^n$ be the input.
        \item Construct dataset $D' = \{ x' = \kappa \cdot x \mid x \in D \}$. 
        \item Run $\alg(D')$ to get an output $\tD'$.
        \item Construct dataset $\tD = \{ \tx = \sign(\tx') \mid \tx' \in \tD' \}$.
        \item Output $\tD$.
    \end{itemize}
    Since $\newalg$ is a post-processing of $\alg$, the $(\eps, 1/n^{\omega(1)})$-DP guarantee continues to hold. 
    
    Next, we will show that $\newalg$ solves the $(\cFthreesat, \gamma)$-promise sanitization problem. To do this, assume (the promise) that $\psi(D) = 1$ for all $\psi \in \tcF \subseteq \cFthreesat$. Since $\alg$ is an $(\cFcoreset{\infty}{k}, \xi)$-accurate sanitizer, with probability at least $2/3$, the following holds for all $\cC \in \bB_p^k$:
    \begin{align} \label{eq:linf-upperbound}
        \left|\cost_{\cC, \infty}(\alg(D')) - \cost_{\cC, \infty}(D')\right|
        =
        \left|\cost_{\cC, \infty}(\tD') - \cost_{\cC, \infty}(D')\right| 
        \leq \xi.
    \end{align}
    We will show that, when the above holds, we have $\psi(\tD) \geq \gamma$ for all $\psi \in \tcF$.
    
    Let $\cC^{\bzero} = \{\bzero\}$. 
    Furthermore, consider any 3-literal disjunction $\psi = \psi_{\bi, \bs}$ in $\tcF$. 
    Recall that the indices of the three literals are $i_1, i_2, i_3$ and their signs are
    $s_1, s_2, s_3$ respectively. 
    Consider centers $c^{\psi, 1}, c^{\psi, 2}, c^{\psi, 3}$ defined by 
    $c^{\psi, j} = 2\kappa \cdot s_j \cdot \bone_{i_j}$. 
    Finally, let $\cC^{\psi} = (c^{\psi, 1}, c^{\psi, 2}, c^{\psi, 3})$.
    
    There are two claims crucial to the proof. 
    The first is a claim that upper bounds $\cost_{\cC^{\bzero}, \infty}(x') + \cost_{\cC^{\psi}, \infty}(x')$ 
    for $x' \in D'$:
    
    \begin{claim} \label{claim:linf-ub-sat}
        For any $x' \in D'$ and $\psi \in \tcF$, we have
        $\cost_{\cC^{\bzero}, \infty}(x') + \cost_{\cC^{\psi}, \infty}(x') \leq 2\kappa^2.$
    \end{claim}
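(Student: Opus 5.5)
The plan is to compute both terms exactly for a data point $x' = \kappa x$ with $x \in D \subseteq \{\pm 1\}^d$. The first term is immediate: every coordinate of $x'$ has absolute value exactly $\kappa$, so $\|x' - \bzero\|_\infty = \kappa$. This gives $\cost_{\cC^{\bzero}, \infty}(x') = \kappa^2$. It therefore suffices to show $\cost_{\cC^{\psi}, \infty}(x') \leq \kappa^2$, i.e., that some center $c^{\psi,j}$ lies within $\ell_\infty$-distance $\kappa$ of $x'$.

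For the second term I will use the promise. Since $\psi \in \tcF$, we have $\psi(D) = 1$. Because $\psi$ takes values in $\{0,1\}$ on $\{\pm1\}^d$ and $\psi(D)$ is an average over $D$, every $x \in D$ satisfies $\psi(x) = 1$. So for each $x \in D$ there is some $j \in \{1,2,3\}$ with $s_j x_{i_j} = 1$, i.e., $x_{i_j} = s_j$. I then compare $x'$ with $c^{\psi,j} = 2\kappa s_j \bone_{i_j}$ coordinate by coordinate:
\begin{itemize}
\item On coordinate $i_j$, the difference is $|\kappa s_j - 2\kappa s_j| = \kappa$.
\item On every other coordinate $\ell$, the center is zero, so the difference is $|\kappa x_\ell| = \kappa$.
\end{itemize}
Hence $\|x' - c^{\psi,j}\|_\infty = \kappa$, and $\cost_{\cC^{\psi},\infty}(x') \le \kappa^2$. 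Adding the two terms gives the bound $2\kappa^2$.

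Two side checks are needed. First, the centers must be legal queries, i.e., lie in $\bB_{\infty,d}$; this holds since $2\kappa = 1$. Second, the step that needs the most care is passing from the averaged promise $\psi(D)=1$ to the pointwise statement $\psi(x)=1$ for every $x \in D$. This uses only that $\psi \le 1$ pointwise on the Boolean cube, which follows from the product formula $\psi_{\bi,\bs}(x) = 1 - \tfrac14\prod_{j}(1 - s_j x_{i_j})$: each factor $1 - s_j x_{i_j}$ is $0$ or $2$, so $\psi(x) \in \{0,1\}$. Beyond this, the argument is direct computation.
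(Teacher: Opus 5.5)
Your proof is correct and follows the paper's argument. The origin contributes exactly $\kappa^2$, and for a literal $j$ satisfied by $x$, the center $c^{\psi,j}$ differs from $x'$ by exactly $\kappa$ in every coordinate, so $\cost_{\cC^{\psi},\infty}(x') \le \kappa^2$.

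One cosmetic remark: with the coefficient $\tfrac14$ as literally printed in the formula for $\psi$, an all-false clause gives $\psi(x) = -1$ rather than $0$ (the intended normalization is $\tfrac18$). This does not affect you, since your step from $\psi(D)=1$ to $\psi(x)=1$ for all $x \in D$ only uses $\psi \le 1$ pointwise, which holds either way.
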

    
    \begin{claimproof}
        Notice that $\cost_{\cC^{\bzero}, \infty}(x') = \|x'\|^2_{\infty} = \kappa^2$ for all $x' \in D'$. 
        
        Meanwhile, from the 
        assumption that $\psi(D) = 1$, we also have that $\psi(x) = 1$ for all $x \in D$. 
        That is, there exists a literal of $\psi$, say the $j$th literal for $j \in [3]$, that is set to true in $x$ (i.e. $x_{i_j} = s_j$). 
        Consider $x' - c^{\psi, j}$. We claim that all of its coordinates have absolute value $\kappa$, i.e. $x' - c^{\psi, j} \in \{\pm \kappa\}^d$. To see this, consider two cases based on the coordinate $i \in [d]$:
        \begin{itemize}
        \item Case I: $i \ne i_j$. In this case, we simply have $(x' - c^{\psi, j})_i = x'_i \in \{\pm \kappa\}$ as desired.
        \item Case II: $i = i_j$. In this case, $(x' - c^{\psi, j})_{i_j} = \kappa \cdot x_{i_j} -  2\kappa \cdot s_j = - \kappa \cdot x_{i_j} \in \{\pm \kappa\}$.
        \end{itemize}
        Hence, we have $x' - c^{\psi, j} \in \{\pm \kappa\}^d$, which implies that $\|x' - c^{\psi, j}\|_{\infty} = \kappa$. 
        Thus, $\cost_{\cC^{\bzero}, \infty}(x') + \cost_{\cC^{\psi}, \infty}(x') \leq \kappa^2 + \kappa^2 = 2\kappa^2$.
    \end{claimproof}
    
    The second claim is a lower bound on $\cost_{\cC^{\bzero}, \infty}(\tx') + \cost_{\cC^{\psi}, \infty}(\tx')$ for $\tx' \in \tD'$, as stated below. It should be noted that, when $\psi(\tx) = 1$, this lower bound is exactly the same as the bound in the above claim. However, if $\psi(\tx) = 0$, the lower bound here is larger. This will indeed allow us to bound the number of $\tx$'s that fall into the latter case.
    
    \begin{claim} \label{claim:linf-lb-unsat}
        For any $\tx' \in \tD'$ and $\psi \in \tcF$, we have 
        $\cost_{\cC^{\bzero}, \infty}(\tx') + \cost_{\cC^{\psi}, \infty}(\tx') \geq 2\kappa^2(2 - \psi(\tx)).$
    \end{claim}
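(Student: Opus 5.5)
We need to show: for any $\tx' \in \bB_{\infty,d}$ with $\tx = \sign(\tx')$, we have $\|\tx'\|_\infty^2 + \min_j \|\tx' - c^{\psi,j}\|_\infty^2 \geq 2\kappa^2$, and $\geq 4\kappa^2$ when $\psi(\tx) = 0$. The plan is to let $m = \|\tx'\|_\infty \in [0,1]$ and lower bound each distance to a center coordinatewise. The cost to the origin is exactly $m^2$, so everything reduces to lower bounding $\|\tx' - c^{\psi,j}\|_\infty$ for each $j$.

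\textbf{Per-center bound.} Fix $j$ and look only at coordinate $i_j$, where $c^{\psi,j}$ has entry $2\kappa s_j$. This gives
\[
\|\tx' - c^{\psi,j}\|_\infty \geq |\tx'_{i_j} - 2\kappa s_j| = |2\kappa - s_j \tx'_{i_j}|.
\]
\begin{itemize}
\item If the $j$th literal is false in $\tx$, i.e.\ $\sign(\tx'_{i_j}) = -s_j$, then $s_j\tx'_{i_j} \le 0$. (The convention $\sign(0)=1$ needs a moment's care, but in either case $s_j\tx'_{i_j}\le 0$ holds.) Hence the distance is at least $2\kappa$.
\item Otherwise we use the weaker bound: since $\|\tx'-c^{\psi,j}\|_\infty \ge 2\kappa - s_j\tx'_{i_j} \ge 2\kappa - m$, the distance is at least $\max(2\kappa - m, 0)$. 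This bound also holds in the false-literal case.
\end{itemize}

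\textbf{Combining.} In general, $\cost_{\cC^{\psi},\infty}(\tx') \geq \max(2\kappa - m,0)^2$. Therefore
\[
m^2 + \max(2\kappa-m,0)^2 \geq 2\kappa^2,
\]
by convexity: $m^2+(2\kappa-m)^2$ is minimized at $m=\kappa$, and if $m\ge 2\kappa$ then already $m^2\ge 4\kappa^2$. This covers the case $\psi(\tx)=1$.

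If $\psi(\tx)=0$, all three literals are false, so every center is at distance at least $2\kappa$. Then $\cost_{\cC^{\psi},\infty}(\tx')\ge 4\kappa^2$, and adding $m^2\ge 0$ gives the bound $4\kappa^2 = 2\kappa^2(2-\psi(\tx))$. Note that $\psi(\tx)\in\{0,1\}$ here, since $\psi$ is a disjunction evaluated on the Boolean point $\tx$.

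The only delicate point is the sign-at-zero convention in the false-literal case. Since $\sign(0)=1$, a false literal with $s_j=1$ forces $\tx'_{i_j}<0$, and one with $s_j=-1$ forces $\tx'_{i_j}\ge 0$; in both cases $s_j\tx'_{i_j}\le 0$. Beyond that, the argument is routine.
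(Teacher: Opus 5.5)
Your proof is correct and follows essentially the same route as the paper. Both arguments isolate coordinate $i_j$ of each center, handle $\psi(\tx)=0$ via $s_j\tx'_{i_j}\le 0$ to get distance at least $2\kappa$, and for $\psi(\tx)=1$ minimize a quadratic of the form $a^2+(2\kappa-a)^2$; the paper takes $a=\tx'_{i_j}$ and uses the identity $(2\kappa s_j - a)^2 + a^2 = 2\kappa^2 + 2(\kappa s_j - a)^2$, whereas your choice $a=\|\tx'\|_\infty$ costs you the extra $\max(\cdot,0)$ case, and your explicit check of the $\sign(0)=1$ convention tidies up the paper's informal ``different signs'' step.
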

    
    \begin{claimproof}
        First, consider the case $\psi(\tx) = 1$. 
        We have $\cost_{\cC^{\psi}, \infty}(\tx') = \|c^{\psi, j} - \tx'\|^2_{\infty}$ for some $j \in [3]$. 
        We can lower bound this further by observing that 
        $\|c^{\psi, j} - \tx'\|_{\infty} \geq |(c^{\psi, j})_{i_j} - \tx'_{i_{j}}| = |2\kappa \cdot s_{j} - \tx'_{i_{j}}|$. 
        Meanwhile, $\cost_{\cC^{\bzero}, \infty}(\tx') = \|\tx'\|_\infty^2 \geq (\tx'_{i_{j}})^2$. 
        Thus, we have
        \begin{align*}
            \cost_{\cC^{\bzero}, \infty}(\tx') + \cost_{\cC^{\psi}, \infty}(\tx') \geq \left(2\kappa \cdot s_{j} - \tx'_{i_{j}}\right)^2 + (\tx'_{i_{j}})^2 &= 2\kappa^2 + 2\left(\kappa \cdot s_{j} - \tx'_{i_{j}}\right)^2 \\ &\geq 2\kappa^2.
        \end{align*}
        Next, suppose that $\psi(\tx) = 0$. In this case, for all $j \in [3]$, we have that $(c^{\psi, j})_{i_{j}}$ and $\tx'_{i_{j}}$ have different signs. Thus, $\|c^{\psi, j} - \tx'\|_{\infty} \geq |(c^{\psi, j})_{i_{j}} - \tx'_{i_{j}}| \geq |(c^{\psi, j})_{i_{j}}| = 2\kappa$. As a result, we have $\cost_{\cC^{\psi}, \infty}(\tx') \geq (2\kappa)^2 = 4\kappa^2$ as desired.
    \end{claimproof}
    
    Combining the above two claims and \eqref{eq:linf-upperbound}, we get
    \begin{align*}
    2\kappa^2(2 - \psi(\tD)) &= \E_{\tx \sim \tD} \left[2\kappa^2(2 - \psi(\tx))\right] \\
    \text{(\Cref{claim:linf-lb-unsat})} &\leq  \E_{\tx' \sim \tD'} \left[\cost_{\cC^{\bzero}, \infty}(\tx') + \cost_{\cC^{\psi}, \infty}(\tx')\right] \\
    &= \cost_{\cC^{\bzero}, \infty}(\tD') + \cost_{\cC^{\psi}, \infty}(\tD') \\
    \text{Using \eqref{eq:linf-upperbound}} &\leq \cost_{\cC^{\bzero}, \infty}(D') + \cost_{\cC^{\psi}, \infty}(D') + 2\xi \\
    \text{(\Cref{claim:linf-ub-sat})} &\leq 2\kappa^2 + 2\xi.
    \end{align*}
    Thus, from our choice of parameter $\xi = \frac{1 - \gamma}{4}, \kappa = \frac{1}{2}$, we have that $\psi(\tD) \geq \gamma$. As a result, we can conclude that $\newalg$ solves the $(\cFthreesat, \gamma)$-promise sanitization problem. Finally, by \Cref{thm:3sat-hardness}, this cannot hold assuming the existence of one-way functions.
\end{proof}

\section{Hardness in the Euclidean metric}

Next, we prove that hardness in the Euclidean metric, which is similar to the hardness for the $\ell_\infty$-metric (\Cref{thm:main-linf}) except with $\alpha, \beta = \Omega\left(\frac{1}{d^2}\right)$ instead of constants.
\begin{theorem} \label{thm:main-l2}
    Let $\eps > 0$ be any constant. Assuming the existence of one-way functions, there is no polynomial-time
    $(\eps, 1/n^{\omega(1)})$-DP algorithm for the $(3, 2, \alpha, \beta)$-coreset estimation problem for some $\alpha = \Omega\left(\frac{1}{d^2}\right)$ and 
    $\beta = \Omega\left(\frac{1}{d^2}\right)$.
\end{theorem}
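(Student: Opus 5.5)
We mirror the proof of \Cref{thm:main-linf}: by \Cref{obs:coreset-to-sanitizer}, it suffices to show there is no polynomial-time $(\eps, 1/n^{\omega(1)})$-DP $(\cFcoreset{2}{3}, \xi)$-accurate sanitizer with $\xi = c(1-\gamma)/d^2$ for a suitable constant $c>0$, where $\gamma$ is from \Cref{thm:3sat-hardness}. The reduction is identical: given $D \in (\{\pm1\}^d)^n$, scale to $D' = \kappa D$ with $\kappa = 1/(2\sqrt d)$, so that $D' \subseteq \bB_{2,d}$, run the sanitizer, and round each output point by $\sign(\cdot)$. Privacy follows by post-processing. For a clause $\psi=\psi_{\bi,\bs}\in\tcF$, we use the centers $c^{\psi,j} = 2\kappa s_j \bone_{i_j}$, which lie in $\bB_{2,d}$, together with $\cC^{\bzero}=\{\bzero\}$.

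The analogue of \Cref{claim:linf-ub-sat} is that for $x'\in D'$ satisfying $\psi$ via literal $j$, we have $\|x'\|_2^2 = d\kappa^2$ and $\|x'-c^{\psi,j}\|_2^2 = d\kappa^2$, since flipping one coordinate preserves the norm. Hence the sum of the two costs is at most $2d\kappa^2$.

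The analogue of \Cref{claim:linf-lb-unsat} needs more care, because in $\ell_2$ the other coordinates contribute. For any $y$ and any $j$, we have
\[
\|y\|_2^2 + \|y-c^{\psi,j}\|_2^2 = 2\|y\|_2^2 - 4\kappa s_j y_{i_j} + 4\kappa^2 .
\]
If $\sign(y_{i_j}) \ne s_j$ for all $j$, the middle term is nonnegative for every $j$, so the sum is at least $2\|y\|^2+4\kappa^2$. Otherwise, the sum is at least $2\|y\|^2 - 4\kappa|y_{i_j}| + 4\kappa^2$. This is awkward because $\|y\|^2$ on the output is not pinned to $d\kappa^2$. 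To handle this, we write $\|y\|^2 = \|y_{-i_j}\|^2 + y_{i_j}^2$, which gives the bound $2\|y_{-i_j}\|^2 + 2(|y_{i_j}|-\kappa)^2 + 2\kappa^2$. The goal is a gap of $4\kappa^2 - 2\kappa^2 = 2\kappa^2$ relative to the satisfied case, which is a $\Theta(1/d)$ fraction of the baseline $2d\kappa^2 = 1/2$. To avoid depending on $\|y_{-i}\|^2$, we subtract a third query that measures it. Concretely, we take the combination $Q_\psi(y) := \cost_{\cC^{\bzero}}(y) + \cost_{\cC^\psi}(y) - \cost_{\cC^{\bzero}}(y)$, which is simply $\cost_{\cC^\psi}(y)$. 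We then compare $\cost_{\cC^\psi}(y) - \|y\|^2 = \min_j(4\kappa^2 - 4\kappa s_j y_{i_j})$ against the same quantity on $D'$, where it equals exactly $0$ for satisfied points. On the output side, this quantity is at least $4\kappa^2$ when $\psi(\sign y)=0$. When $\psi(\sign y) = 1$, it can be negative, of size up to $-4\kappa|y_{i_j}|$. This is the main obstacle: a point with a large coordinate could compensate for unsatisfied points.

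To control this, we also use the cost with the single center $c = \pm 2\kappa\bone_i$ and with center $\bzero$ to bound $\E_{\tD'}[y_i^2]$ or the signed means. Specifically, $\cost_{\{2\kappa s\bone_i\}}(y) - \|y\|^2 = 4\kappa^2 - 4\kappa s y_i$, which has mean $4\kappa^2$ on $D'$ up to $\pm 4\kappa\E[x'_i]$. A cleaner route is to replace the centers by the pair $\{\bzero\}$ and $\{2\kappa s_j\bone_{i_j}\}$ with $\ell_2$ clipping, that is, to upper bound $-s_jy_{i_j}$ by $\kappa$ using the $\ell_\infty$-style trick $2\|y\|^2\ge\ldots$. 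I expect this to yield only the weaker per-point bound $\cost_{\cC^{\bzero}}(y)+\cost_{\cC^\psi}(y) \ge 2d'\kappa^2$, with a loss of $2(|y_{i_j}|-\kappa)^2 \ge 0$ in the satisfied case and $+2\kappa^2$ in the unsatisfied case, as long as the baseline $\|y_{-i_j}\|^2$ is replaced by a quantity averaged against $D'$ via the extra query $\cost_{\{\bzero\}}$. Averaging then gives $2\kappa^2(1-\psi(\tD)) \le O(\xi)$, so with $\kappa^2 = 1/(4d)$ we get $\psi(\tD)\ge\gamma$ once $\xi = O((1-\gamma)/d)$. Finally, translating back through \Cref{obs:coreset-to-sanitizer}, where multiplicative error interacts with costs of order $1$, gives the stated $\alpha,\beta = \Omega(1/d^2)$, which is safely within the slack. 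The delicate step, which I would work out first, is making the ``origin'' query simultaneously cancel $\|y_{-i_j}\|^2$ in the satisfied case while still charging points with oversized coordinates.
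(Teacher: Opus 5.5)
\paragraph{Gap: nothing in your argument controls oversized coordinates.}
Your reduction, the scaling, and the bound on $D'$ are fine. You also correctly isolate the crux. On an output point $y$ whose rounding satisfies $\psi$, the quantity $\cost_{\cC^{\psi},2}(y)-\|y\|_2^2=\min_j(4\kappa^2-4\kappa s_j y_{i_j})$ can be very negative, so a few points with an oversized satisfying coordinate can pay for many falsifying points. But the proposal stops there. The step you call ``delicate'' is the whole content of the proof, and the tools you name cannot supply it.

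Every single-center query has $\cost_{\{c\},2}(y)=\|y\|_2^2-2\langle c,y\rangle+\|c\|_2^2$. So $\{\bzero\}$ and $\{\pm 2\kappa\bone_i\}$ only pin down $\E_{\tD'}\|y\|_2^2$ and $\E_{\tD'}[y]$. They say nothing about $\E[y_i^2]$, or about $\|y_{-i_j}\|_2^2$ when the satisfying index $j$ varies from point to point. The $\ell_\infty$-style sum does not help either. In $\ell_2$, $\cost_{\cC^{\bzero},2}+\cost_{\cC^{\psi},2}$ carries the same information as the difference, because the average of $\cost_{\cC^{\bzero},2}$ is already fixed.

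Here is a concrete failure. Take $\psi=x_1\vee x_2\vee x_3$ and a suitable $D$. Let the output mix:
\begin{itemize}
\item a $p$-fraction of points with $y_1=y_2=y_3=-\kappa$;
\item a $p/2$-fraction with $(y_1,y_2,y_3)=(5\kappa,\kappa,\kappa)$, with the remaining coordinates shrunk so that $\|y\|_2^2=d\kappa^2$;
\item reweighted points of $\{\pm\kappa\}^d$ satisfying $\psi$.
\end{itemize}
This output agrees with $D'$ \emph{exactly} on $\E\|y\|_2^2$, on $\E[y]$, and on $\cost_{\cC^{\psi},2}$, yet $\psi(\tD)=1-p$. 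So no inequality of the form $2\kappa^2(1-\psi(\tD))\le O(\xi)$ can follow from the queries you use.

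\paragraph{What is needed.}
You need a query whose cost is nonlinear in a single coordinate, so that it charges oversized coordinates.

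The paper uses the symmetric pairs $\cC^i=\{\pm d\kappa\bone_i\}$. Since $\cost_{\cC^i,2}(y)=\|y\|_2^2-2d\kappa|y_i|+d^2\kappa^2$, summing over $i$ gives $\sum_i\cost_{\cC^i,2}(y)=(d^3-d^2)\kappa^2+d\,\|y-\kappa\,\sign(y)\|_2^2$. Hence $\E_{\tD'}\|y-\kappa\,\sign(y)\|_2^2\le\xi$. On satisfied output points the clause statistic $\cost_{\cC^{\psi},2}-\cost_{\cC^{\bzero},2}$ is then at least $-4\kappa\|y-\kappa\,\sign(y)\|_2$. Cauchy--Schwarz then yields $\psi(\tD)\ge 1-\xi/(2\kappa^2)-\sqrt{\xi}/\kappa$. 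This identity needs centers of norm $d\kappa\le 1$, which forces $\kappa=1/d$ (your $\kappa=1/(2\sqrt d)$ does not fit it) and gives $\xi=\Theta(1/d^2)$.

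If you want to keep your scaling, a three-center gadget also works.
\begin{itemize}
\item $\cost_{\{\bzero,\pm2\kappa\bone_i\},2}(y)=\|y\|_2^2-4\kappa(|y_i|-\kappa)^+$ coincides with $\cost_{\{\bzero\},2}$ on $D'$. Therefore $4\kappa\,\E_{\tD'}(|y_i|-\kappa)^+\le 2\xi$.
\item The clause statistic is always at least $4\kappa^2(1-\psi(\sign(y)))-4\kappa\sum_j(|y_{i_j}|-\kappa)^+$.
\item Combining the two gives $4\kappa^2(1-\psi(\tD))\le 8\xi$, which is exactly the $\xi=\Theta((1-\gamma)/d)$ you were aiming for.
\end{itemize}
Either way, some per-coordinate control of this kind must be added; without it the argument does not go through.
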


Again, we actually prove the sanitizer variant of the theorem, as stated below, from 
which \Cref{thm:main-l2} follows as a corollary (due to \Cref{obs:coreset-to-sanitizer}).

\begin{theorem} \label{thm:l2-sanitizer}
    Let $\eps > 0$ be any constant. Assuming the existence of one-way functions, there is no polynomial-time 
    $(\eps, 1/n^{\omega(1)})$-DP $(\cFcoreset{2}{k}, \xi)$-sanitizer for $k = 3$ and some $\xi = \Omega(1/d^2)$.
\end{theorem}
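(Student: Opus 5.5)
The plan mirrors the proof of \Cref{thm:linf-sanitizer}: a reduction from the $(\cFthreesat, \gamma)$-promise sanitization problem of \Cref{thm:3sat-hardness}. Given $D \in (\{\pm1\}^d)^n$, set $D' = \{\kappa x \mid x \in D\}$ with $\kappa = 1/\sqrt{d}$, so that $D' \subseteq \bB_{2,d}$. Run the hypothetical $(\cFcoreset{2}{3}, \xi)$-accurate sanitizer $\alg$ on $D'$ to get $\tD'$, and output $\tD = \{\sign(\tx') \mid \tx' \in \tD'\}$. Privacy is by post-processing. It then suffices to show that, when $|\cost_{\cC,2}(\tD') - \cost_{\cC,2}(D')| \le \xi$ for all $\cC$, every $\psi \in \tcF$ satisfies $\psi(\tD) \ge \gamma$, with $\xi = \Theta((1-\gamma)/d)$. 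This is $\Omega(1/d^2)$ as required.

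The difficulty compared with $\ell_\infty$ is that $\cost_{\cC^{\bzero},2} + \cost_{\cC^{\psi},2}$ no longer works. A point such as $\kappa s_j \bone_{i_j}$ has a tiny $\ell_2$-norm and beats the data. So I would instead use a signed combination of a constant number of cost queries; accuracy of each term transfers to the combination with error (number of queries)$\cdot\xi$.

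Fix $\psi = \psi_{\bi,\bs} \in \tcF$ with centers $c^{\psi,j} = 2\kappa s_j \bone_{i_j}$ as before. These lie in the unit ball since $2\kappa \le 1$ for $d \ge 4$. Define
\[
F(y) := \cost_{\cC^{\psi},2}(y) - \cost_{\cC^{\bzero},2}(y) = \min_j \left(4\kappa^2 - 4\kappa s_j y_{i_j}\right) = 4\kappa(\kappa - t(y)), \qquad t(y) := \max_j s_j y_{i_j}.
\]
Next, for each coordinate $i$ use the $3$-center query $\cC^{i} = (\bzero, 2\kappa\bone_i, -2\kappa\bone_i)$. Expanding the squares gives
\[
G_i(y) := \cost_{\cC^{\bzero},2}(y) - \cost_{\cC^{i},2}(y) = -\min\left(0, 4\kappa^2 - 4\kappa|y_i|\right) = 4\kappa\,(|y_i| - \kappa)_+ .
\]
Let $\Phi := F + \sum_{j=1}^3 G_{i_j}$, a signed combination of $8$ cost queries, so $\Phi(\tD') \le \Phi(D') + 8\xi$.

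The first claim is $\Phi(x') = 0$ for $x' \in D'$. Indeed $|x'_i| = \kappa$ kills every $G_i$, and by the promise some literal has $s_j x'_{i_j} = \kappa$, so $t = \kappa$ and $F = 0$. (If $t$ could exceed $\kappa$ we would have $|x'_{i_j}| > \kappa$, which is impossible.)

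The second claim is the pointwise bound $\Phi(y) \ge 4\kappa^2(1 - \psi(\sign(y)))$ for every $y$. If $\psi(\sign(y)) = 0$, then every $s_j y_{i_j} < 0$; this uses the convention $\sign(0) = 1$, since a violated literal with $s_j = 1$ forces $y_{i_j} < 0$. Hence $t < 0$, $F > 4\kappa^2$, and the $G$'s are nonnegative. If $\psi(\sign(y)) = 1$, let $j^*$ attain $t$. Then
\[
F \ge -4\kappa (t - \kappa)_+ \ge -4\kappa(|y_{i_{j^*}}| - \kappa)_+ = -G_{i_{j^*}}(y),
\]
so $\Phi \ge 0$.

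Averaging the pointwise bound over $\tD'$ gives
\[
4\kappa^2(1 - \psi(\tD)) \le \Phi(\tD') \le 8\xi ,
\]
so $\psi(\tD) \ge 1 - 2\xi/\kappa^2 = 1 - 2\xi d$. Choosing $\xi = (1-\gamma)/(2d)$ yields $\psi(\tD) \ge \gamma$, contradicting \Cref{thm:3sat-hardness}.

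The main obstacle is the second claim: finding queries that penalize coreset points whose coordinates overshoot $\kappa$, which in $\ell_\infty$ was handled by the origin alone. The three-center query $\{\bzero, \pm 2\kappa \bone_i\}$ resolves this. I would double-check the tie/sign conventions and the unit-ball constraints for small $d$; small $d$ can be absorbed into constants.
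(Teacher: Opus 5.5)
Your proof is correct, and it takes a genuinely different route from the paper's. The paper also uses the difference $F = \cost_{\cC^{\psi},2} - \cost_{\cC^{\bzero},2}$, but it controls ``cheating'' globally:
\begin{itemize}
\item It scales by $\kappa = 1/d$ and uses the two-center queries $\{\pm d\kappa\,\bone_i\}$ for \emph{every} $i \in [d]$.
\item A sum-of-squares identity then gives $\E_{\tx'}\|\tx' - \kappa\,\sign(\tx')\|_2^2 \le \xi$.
\item It lower-bounds $F(\tx') \ge 4\kappa^2(1-\psi(\tx)) - 4\kappa\|\tx' - \kappa\,\sign(\tx')\|_2$.
\item Cauchy--Schwarz yields $\psi(\tD) \ge 1 - \xi/(2\kappa^2) - \sqrt{\xi}/\kappa$, hence $\xi = (1-\gamma)^2/(4d^2)$.
\end{itemize}
The norm-$d\kappa$ centers that identity requires are exactly what force $\kappa \le 1/d$.

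Your gadget $\{\bzero, \pm 2\kappa\bone_i\}$ works differently. It penalizes only overshoot $|y_i| > \kappa$, and only on the three coordinates of the clause. That is precisely the direction in which a point can push $F$ below its value on the data. As a result, your pointwise inequality $\Phi(y) \ge 4\kappa^2(1-\psi(\sign(y)))$ is exact, with no error term and no Cauchy--Schwarz step. Since your centers have norm $2\kappa$, you can take $\kappa = 1/\sqrt{d}$.

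The payoff is a strictly stronger conclusion: $\xi = \Theta((1-\gamma)/d)$ instead of $\Theta((1-\gamma)^2/d^2)$. This matches the $1+\Theta(1/d)$ satisfiable/unsatisfiable gap described in the technical overview. The paper's route buys a global structural fact, that the coreset is close to $\{\pm\kappa\}^d$ in mean squared distance, but that fact is not needed for this theorem.

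Three minor points, none of which affect the result:
\begin{itemize}
\item When $\psi(\sign(y)) = 0$, a violated literal with $s_j = -1$ only gives $y_{i_j} \ge 0$. So you get $s_j y_{i_j} \le 0$, $t \le 0$ and $F \ge 4\kappa^2$, not a strict inequality. The non-strict bound is all you use.
\item After cancellation, $\Phi = \cost_{\cC^{\psi},2} + 2\cost_{\cC^{\bzero},2} - \sum_j \cost_{\cC^{i_j},2}$ has total coefficient weight $6$, so your $8\xi$ can be $6\xi$.
\item For small $d$, take $\kappa = \min(1/2, 1/\sqrt{d})$. This keeps both the data and all centers in $\bB_{2,d}$ for every $d$.
\end{itemize}
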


The proof follows a similar strategy as in \Cref{thm:linf-sanitizer} but it requires a more subtle sequence of inequalities to prove the accuracy bound. We note that we also need to set a smaller scaling factor $\kappa$: Setting $\kappa = \frac{1}{2}$ as before would result in $\|x'\|_2 = {\kappa\sqrt{d}} > 1$. This suggests setting $\kappa = \frac{1}{\sqrt{d}}$. However, we actually go with the setting $\kappa = \frac{1}{d}$ instead because, to facilitate one of our sum-of-squares rearrangement (in \Cref{claim:l2-deviation-from-boolean}), we need to select the centers (in $\cC^i$) to be of norm $d\kappa$, which necessitates $\kappa \leq \frac{1}{d}$.

\begin{proof}[Proof of \Cref{thm:l2-sanitizer}]
Similar to the proof of \Cref{thm:linf-sanitizer}, we reduce from $(\cFthreesat, \gamma)$-promise sanitization problem.  Assume for the sake of contradiction that there exists a polynomial-time $(\eps, 1/n^{\omega(1)})$-DP $(\cFcoreset{2}{k}, \xi)$-sanitizer $\alg$, where $\xi = \frac{(1 - \gamma)^2}{4d^2}$.

Let $\newalg$ be exactly the same as in the proof of \Cref{thm:linf-sanitizer} except that we set $x' = \kappa \cdot x$ where $\kappa = \frac{1}{d}$ (instead of $\kappa = 1/2$ as before), which ensures that $\|x'\|_2 \leq 1$. Again, $\newalg$ is $(\eps, 1/n^{\omega(1)})$-DP due to post-processing. We are left to show that $\newalg$ solves the $(\cFthreesat, \gamma)$-promise sanitization problem. To do this, assume that $\psi(D) = 1$ for all $\psi \in \tcF$. Since $\alg$ is an $(\cFcoreset{2}{k}, \xi)$-sanitizer, with probability at least $2/3$, the following holds for all $\cC \in \bB_2^k$:
\begin{align} \label{eq:l2-upperbound}
\left|\cost_{\cC, 2}(\tD') - \cost_{\cC, 2}(D')\right| \leq \xi.
\end{align}
We will show that, when the above holds, we have $\psi(\tD) \geq \gamma$ for all $\psi \in \tcF$.

Consider any 3-literal disjunction $\psi$. We use the same notation (e.g.,  $\cC^{\bzero}, \cC^{\psi}, c^{\psi, j}$) as in the proof of \Cref{thm:linf-sanitizer}. Additionally, for each $i \in [d]$, let $\cC^i = \{+d\kappa \cdot \bone_i, -d\kappa \cdot \bone_i\}$.

We begin by showing that most of the coordinates of $\tx' \in \tD'$ have to be close to $-\kappa$ or $+\kappa$. The exact bound is given below.

\begin{claim} \label{claim:l2-deviation-from-boolean}
$\E_{\tx' \in \tD'}[\|\tx' - \kappa \cdot \sign(\tx')\|_2^2] \leq \xi$.
\end{claim}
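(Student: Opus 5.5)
The plan is to write the quantity $\|y - \kappa \cdot \sign(y)\|_2^2$, for an arbitrary point $y \in \bB_{2,d}$, \emph{exactly} as an affine combination of the costs $\cost_{\cC^i, 2}(y)$ for $i \in [d]$. Once we have such an identity, we average it over $\tD'$ and over $D'$. The accuracy guarantee \eqref{eq:l2-upperbound} then transfers the bound from $D'$, where the quantity is identically zero, to $\tD'$, at the price of an additive $\xi$. The key point is that the identity has positive coefficients summing to $1$ on the cost terms, so the error does not blow up.

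\textbf{Step 1 (cost of $\cC^i$).} For $y \in \R^d$, minimizing over the two centers $\pm d\kappa \cdot \bone_i$ gives
\[
\cost_{\cC^i,2}(y) = \min_{\sigma\in\{\pm1\}} \|y - \sigma d\kappa \bone_i\|_2^2 = \|y\|_2^2 - 2d\kappa |y_i| + d^2\kappa^2 .
\]
The minimum is attained at $\sigma = \sign(y_i)$. The convention $\sign(0)=1$ is harmless here, since only $|y_i|$ appears.

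\textbf{Step 2 (the identity).} Summing over $i$ and dividing by $d$ gives
\[
\frac{1}{d}\sum_{i\in[d]}\cost_{\cC^i,2}(y) = \|y\|_2^2 - 2\kappa\|y\|_1 + d^2\kappa^2 .
\]
On the other hand, $\|y - \kappa\sign(y)\|_2^2 = \sum_i (|y_i| - \kappa)^2 = \|y\|_2^2 - 2\kappa\|y\|_1 + d\kappa^2$. Combining the two,
\[
\|y - \kappa\sign(y)\|_2^2 = \frac{1}{d}\sum_{i\in[d]}\cost_{\cC^i,2}(y) - (d^2-d)\kappa^2 .
\]
This identity holds for every $y$, so by linearity it also holds after taking expectations over any dataset.

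\textbf{Step 3 (transfer).} Each $\cC^i$ is a legitimate query in $\cFcoreset{2,d}{3}$. Its centers have norm $d\kappa = 1$, so they lie in $\bB_{2,d}$, and we duplicate one center to get three. Moreover, every $x' \in D'$ lies in $\{\pm\kappa\}^d$, so $\|x' - \kappa\sign(x')\|_2^2 = 0$. Applying the identity to $D'$ therefore gives $\frac1d\sum_i \cost_{\cC^i,2}(D') = (d^2-d)\kappa^2$. Applying \eqref{eq:l2-upperbound} to each of the $d$ queries $\cC^i$, we get
\[
\E_{\tx'\in\tD'}\big[\|\tx'-\kappa\sign(\tx')\|_2^2\big] = \frac1d\sum_{i}\cost_{\cC^i,2}(\tD') - (d^2-d)\kappa^2 \le \frac1d\sum_i\big(\cost_{\cC^i,2}(D')+\xi\big) - (d^2-d)\kappa^2 = \xi .
\]

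The only real obstacle is Step 2: finding a combination of $k$-means costs that isolates the deviation from the Boolean cube exactly. This is where the norm of the centers must be $d\kappa$, which is why $\kappa \le 1/d$ was needed. With that choice the cross term $-2d\kappa|y_i|$, after averaging over $i$, matches the $-2\kappa\|y\|_1$ term in the squared deviation. The remaining steps are routine bookkeeping.
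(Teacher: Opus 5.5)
Your proof is correct and follows the paper's argument. The paper derives the same exact identity $\sum_{i\in[d]} \cost_{\cC^i,2}(y) = (d^3-d^2)\kappa^2 + d\,\|y-\kappa\cdot\sign(y)\|_2^2$ (your Step 2 multiplied by $d$) by per-coordinate rearrangement, then applies the accuracy guarantee to each $\cC^i$ and uses that the deviation vanishes on $D'$; your explicit checks that the centers lie in the unit ball and that $\cC^i$ can be padded to three centers are details the paper leaves implicit.
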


\begin{claimproof}
For any $y \in \R^d$, observe that $\cost_{\cC^{i}, 2}(y) = \|y -  d\kappa \cdot \sign(y_i) \cdot \bone_i\|^2 = (y_i - d\kappa \cdot \sign(y_i))^2 + \sum_{j \ne i} y_j^2$. Thus, aggregating this across all $i \in [d]$, we thus have
\begin{align*}
\sum_{i \in [d]} \cost_{\cC^{i},2}(y) &= \sum_{i \in [d]} \left((d - 1) y_i^2 + (y_i - d\kappa \cdot \sign(y_i))^2\right) \\ &= \sum_{i \in [d]} \left((d^2 - d)\kappa^2 + d(y_i - \kappa \cdot \sign(y_i))^2\right) \\
&= (d^3 - d^2)\kappa^2 + d \cdot \|y - \kappa \cdot \sign(y)\|_2^2.
\end{align*}
From \eqref{eq:l2-upperbound} and since $x' = \kappa \cdot \sign(x)$ for all $x \in D$, we thus have
\begin{align*}
d \cdot \xi \geq \sum_{i \in [d]} \left( \cost_{\cC^{i},2}(\tD') - \cost_{\cC^{i}, 2}(D') \right) = \E_{\tx' \in \tD'}[d \cdot \|\tx' - \kappa \cdot \sign(\tx')\|_2^2].
\end{align*}
Dividing both sides by $d$ yields the claimed inequality.
\end{claimproof}

The next two claims are in the same spirit as \Cref{claim:linf-ub-sat} and \Cref{claim:linf-lb-unsat} in the proof of \Cref{thm:linf-sanitizer}.

\begin{claim} \label{claim:l2-ub-sat}
For any $x' \in D'$ and $\psi \in \tcF$, we have
$\cost_{\cC^{\psi}, 2}(x') - \cost_{\cC^{\bzero}, 2}(x') \leq 0.$
\end{claim}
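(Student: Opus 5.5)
The plan is to compute both costs explicitly for a point $x' \in D'$. Here $x' = \kappa \cdot x$ with $x \in \{\pm 1\}^d$ and $\kappa = \frac{1}{d}$.

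First, since $\cC^{\bzero} = \{\bzero\}$, we have
\[
\cost_{\cC^{\bzero},2}(x') = \|x'\|_2^2 = d\kappa^2 .
\]

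Second, by the promise $\psi(D) = 1$ we get $\psi(x) = 1$ for every $x \in D$. So, exactly as in the proof of \Cref{claim:linf-ub-sat}, there is an index $j \in [3]$ with $x_{i_j} = s_j$. Since $\cost_{\cC^{\psi},2}(x')$ is a minimum over the three centers, it suffices to bound the distance to the single center $c^{\psi,j} = 2\kappa \cdot s_j \cdot \bone_{i_j}$.

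Third, we reuse the coordinate-wise case analysis from \Cref{claim:linf-ub-sat}. For $i \ne i_j$ we have $(x' - c^{\psi,j})_i = x'_i \in \{\pm\kappa\}$. For $i = i_j$ we have
\[
(x' - c^{\psi,j})_{i_j} = \kappa x_{i_j} - 2\kappa s_j = -\kappa x_{i_j} \in \{\pm\kappa\}.
\]
Hence $x' - c^{\psi,j} \in \{\pm\kappa\}^d$, and therefore
\[
\cost_{\cC^{\psi},2}(x') \le \|x' - c^{\psi,j}\|_2^2 = d\kappa^2 = \cost_{\cC^{\bzero},2}(x').
\]
Equivalently, one can expand the square: $\|x' - c\|_2^2 - \|x'\|_2^2 = \|c\|_2^2 - 2\langle x', c\rangle = 4\kappa^2 - 4\kappa^2 s_j x_{i_j} = 0$. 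Subtracting gives the claimed inequality $\cost_{\cC^{\psi},2}(x') - \cost_{\cC^{\bzero},2}(x') \le 0$.

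One housekeeping point remains: the centers must lie in $\bB_{2,d}$, so that \eqref{eq:l2-upperbound} applies to them later. This holds because $\|c^{\psi,j}\|_2 = 2\kappa = 2/d \le 1$ for $d \ge 2$; the case $d = 1$ is trivial or can be excluded.

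I do not expect a real obstacle. The claim is the $\ell_2$ analogue of \Cref{claim:linf-ub-sat}, and the only thing to get right is using the satisfied literal to pick the center.
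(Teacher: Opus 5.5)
Your proof is correct and matches the paper's argument. You pick the literal $j$ satisfied by $x$, reuse the coordinate analysis of \Cref{claim:linf-ub-sat} to get $x' - c^{\psi,j} \in \{\pm\kappa\}^d$, and conclude $\cost_{\cC^{\psi},2}(x') \le \|x'-c^{\psi,j}\|_2^2 = d\kappa^2 = \|x'\|_2^2$. Your remark that $c^{\psi,j} \in \bB_{2,d}$ is valid, though this pointwise claim does not need it; it only matters later, when \eqref{eq:l2-upperbound} is applied to $\cC^{\psi}$.
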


\begin{claimproof}
From our assumption, there exists a literal of $\psi$, say the $j$th literal for $j \in [3]$, that is set to true in $x$. From the proof of \Cref{claim:linf-ub-sat}, we have $x' - c^{\psi, j} \in \{\pm \kappa\}^d$. Recall that $x'$ also belongs to $\{\pm \kappa\}^d$. Thus, $\cost_{\cC^{\bzero}, 2}(x') = \|x'\|_2^2 = \kappa^2 \cdot d = \|x' - c^{\psi, j}\|_2^2 \geq \cost_{\cC^{\psi}, 2}(x')$.
\end{claimproof}

\begin{claim} \label{claim:l2-lb-unsat}
For any $\tx' \in \tD'$ and $\psi \in \tcF$, we have 
\begin{align*}
\cost_{\cC^{\psi}, 2}(\tx') - \cost_{\cC^{\bzero}, 2}(\tx') &\geq 4\kappa^2 (1 - \psi(\tx)) - 4\kappa \cdot \|\tx' - \kappa \cdot \sign(\tx')\|_2.
\end{align*}
\end{claim}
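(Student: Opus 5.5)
The plan is to compute $\cost_{\cC^{\psi}, 2}(\tx') - \cost_{\cC^{\bzero}, 2}(\tx')$ exactly as a minimum over the three centers. After that, compare the Boolean rounding $z := \kappa \cdot \sign(\tx')$ with the actual point $\tx'$. Throughout, write $y := \tx'$ and $e := y - z$, and recall that $\tx = \sign(\tx') = \sign(y)$.

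\textbf{Step 1: expand the squared norms.} Since $\cost_{\cC^{\bzero}, 2}(y) = \|y\|_2^2$, for each $j \in [3]$ we have
\begin{align*}
\|y - c^{\psi, j}\|_2^2 - \|y\|_2^2 = \|c^{\psi, j}\|_2^2 - 2\langle y, c^{\psi, j}\rangle = 4\kappa^2 - 4\kappa \cdot s_j \cdot y_{i_j},
\end{align*}
using $c^{\psi, j} = 2\kappa s_j \bone_{i_j}$. Taking the minimum over $j$ gives
\begin{align*}
\cost_{\cC^{\psi}, 2}(y) - \cost_{\cC^{\bzero}, 2}(y) = \min_{j \in [3]} \left(4\kappa^2 - 4\kappa s_j y_{i_j}\right).
\end{align*}

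\textbf{Step 2: split each term into a Boolean part and an error.} Substitute $y_{i_j} = z_{i_j} + e_{i_j}$ with $z_{i_j} = \kappa \cdot \sign(y_{i_j}) = \kappa \tx_{i_j}$. The $j$-th term becomes
\begin{align*}
4\kappa^2(1 - s_j \tx_{i_j}) - 4\kappa s_j e_{i_j}.
\end{align*}
The error part satisfies $-4\kappa s_j e_{i_j} \geq -4\kappa |e_{i_j}| \geq -4\kappa \|e\|_2$. Hence each term is at least $4\kappa^2(1 - s_j \tx_{i_j}) - 4\kappa\|e\|_2$, and so is the minimum over $j$.

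\textbf{Step 3: bound the Boolean part by case analysis on $\psi(\tx)$.}
If $\psi(\tx) = 1$, the target bound is $-4\kappa\|e\|_2$, and it suffices that $1 - s_j\tx_{i_j} \geq 0$ for every $j$. This holds since $\tx_{i_j}, s_j \in \{\pm 1\}$.
If $\psi(\tx) = 0$, every literal is false, so $s_j \tx_{i_j} = -1$ for all $j$. Each Boolean part then equals $8\kappa^2 \geq 4\kappa^2 = 4\kappa^2(1 - \psi(\tx))$.
In both cases we obtain $\min_j \geq 4\kappa^2(1-\psi(\tx)) - 4\kappa\|\tx' - \kappa \cdot \sign(\tx')\|_2$, which is the claim.

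I do not expect a genuine obstacle. The only point needing care is Step 1: one must notice that the $\ell_2$ cost difference is \emph{linear} in $y$, because the $\|y\|_2^2$ terms cancel. This linearity is what lets the deviation from Boolean points enter only through a single coordinate $e_{i_j}$. That coordinate is then controlled by $\|e\|_2$, the quantity bounded on average in \Cref{claim:l2-deviation-from-boolean}.
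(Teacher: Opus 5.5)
Your proof is correct and follows the paper's argument. You expand the cost difference to $4\kappa^2 - 4\kappa s_j \tx'_{i_j}$, where the $\|\tx'\|_2^2$ terms cancel, then case-split on $\psi(\tx)$ and charge the deviation from $\pm\kappa$ to a single coordinate, which is at most $\|\tx' - \kappa\cdot\sign(\tx')\|_2$. The only cosmetic difference is in the unsatisfied case: you write $\tx' = \kappa\cdot\sign(\tx') + e$ uniformly and get $8\kappa^2 - 4\kappa\|e\|_2$, whereas the paper uses $s_j\tx'_{i_j}\le 0$ directly to get $4\kappa^2$ with no error term.
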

\begin{claimproof}
We have $\cost_{\cC^{\psi}, 2}(\tx') = \|c^{\psi, j} - \tx'\|_{2}^2$ for some $j \in [3]$. Thus,
\begin{align*}
\cost_{\cC^{\psi}, 2}(\tx') - \cost_{\cC^{\bzero}, 2}(\tx') &= \|c^{\psi, j} - \tx'\|_{2}^2 - \|\tx'\|_2^2 \\
&= \|c^{\psi, j}\|_2^2 -2\left<c^{\psi, j}, \tx'\right> \\
&= 4\kappa^2 - 4\kappa \cdot s_{j} \cdot \tx'_{i_{j}}.
\end{align*}
First, consider the case $\psi(\tx) = 1$. In this case, we can bound the above further by
\begin{align*}
\cost_{\cC^{\psi}, 2}(\tx') - \cost_{\cC^{\bzero}, 2}(\tx') &\geq 4\kappa^2 - 4\kappa \cdot |\tx'_{i_{j}}| \\
&= -4\kappa\left(|\tx'_{i_{j}}| - \kappa\right) \\
&\geq -4\kappa \cdot |\tx'_{i_{j}} - \kappa \cdot \sign(\tx'_{i_{j}})| \\
&\geq -4\kappa \cdot \|\tx' - \kappa \cdot \sign(\tx')\|_2,
\end{align*}
which implies the desired bound.

Finally, for the case $\psi(\tx) = 0$, we simply have $s_{j} \cdot \tx'_{i_{j}} \leq 0$ and thus $\cost_{\cC^{\psi}, 2}(\tx') - \cost_{\cC^{\bzero}, 2}(\tx')  \geq 4\kappa^2$.
\end{claimproof}

Now, from the above three claims, we have
\begin{align*}
0 &\geq \E_{x' \sim D'}\left[\cost_{\cC^{\psi}, 2}(x') - \cost_{\cC^{\bzero}, 2}(x')\right] \\
&= \cost_{\cC^{\psi}, 2}(D') - \cost_{\cC^{\bzero}, 2}(D') \\
\text{Using \eqref{eq:l2-upperbound}} &\geq -2\xi +  \cost_{\cC^{\psi}, 2}(\tD') - \cost_{\cC^{\bzero}, 2}(\tD') \\
&= -2\xi + \E_{\tx' \sim \tD'}\left[\cost_{\cC^{\psi}, 2}(\tx') - \cost_{\cC^{\bzero}, 2}(\tx')\right] \\
(\text{\Cref{claim:l2-lb-unsat}}) &\geq -2\xi + \E_{\tx' \sim \tD'}\left[4\kappa^2 (1 - \psi(\tx)) - 4\kappa \cdot \|\tx' - \kappa \cdot \sign(\tx')\|_2\right] \\
&= 4\kappa^2(1 - \psi(\tD)) - 2\xi - 4\kappa \cdot \E_{\tx' \sim \tD'}\left[\|\tx' - \kappa \cdot \sign(\tx')\|_2\right] \\
(\text{Cauchy–Schwarz Inequality}) &\geq 4\kappa^2(1 - \psi(\tD)) - 2\xi - 4\kappa \cdot \sqrt{\E_{\tx' \sim \tD'}\left[\|\tx' - \kappa \cdot \sign(\tx')\|_2^2\right]} \\
(\text{\Cref{claim:l2-deviation-from-boolean}}) &\geq  4\kappa^2(1 - \psi(\tD)) - 2\xi - 4\kappa \cdot \sqrt{\xi} \\
&\geq 4\kappa^2\left(1 - \frac{\xi}{2\kappa^2} - \frac{\sqrt{\xi}}{\kappa} - \psi(\tD)\right),
\end{align*}
where the first inequality is from \Cref{claim:l2-ub-sat}.

The above inequality implies that $\psi(\tD) \geq 1 - \frac{\xi}{2\kappa^2} - \frac{\sqrt{\xi}}{\kappa}$, which is at least $\gamma$ due to our choice of $\xi = \frac{(1 - \gamma)^2}{4d^2}, \kappa = \frac{1}{d}$.  As a result, $\newalg$ solves the $(\cFthreesat, \gamma)$-promise sanitization problem; by \Cref{thm:3sat-hardness}, this cannot hold assuming the existence of one-way functions.
\end{proof}

\section{Conclusion and Open Questions}
In this paper, we showed computational lower bounds for efficiently constructing DP coresets, assuming the one-way functions exist.   Our result should be contrasted against the work of Feldman et al.~\cite{FeldmanFKN09}, who showed that small private coresets exist information-theoretically.

Our work leaves open several questions for future research.  
While our hardness result for the $\ell_2$-metric rules out approximation factors $1 \pm \Theta(1/d^2)$, it is an interesting question if a dimension-independent
constant-factor hardness (similar to our $\ell_\infty$-metric result) can be obtained for the Euclidean metric. Additionally, our hardness results hold for $k = 3$ due to our ``gadget'' to reduce from 3-literal disjunctions, leaving the case $k = 2$ as an intriguing open question. We remark that, while Ullman and Vadhan~\cite{UllmanV20} also proved the hardness of generating synthetic data for 2-way marginals, their result does not achieve ``perfect completeness'' (in the promise sense that $f(D) = 1$ for all $f \in \cF$ in \Cref{def:prom-sanitizer}). This property is crucial in our reduction, preventing us from obtaining hardness for the $k = 2$ case. 

Furthermore, our proofs rely on the fact that we are dealing with the $k$-means objective (by rearranging terms into sum-of-squares); it would be interesting to see if our results can be extended to the case of, e.g., $k$-median as well. Finally, understanding the computational landscape of private coresets in other metric spaces or for other objective functions is an interesting research direction. 

\bibliographystyle{alpha}
\bibliography{main}

\appendix

\input{app-promise-sanitizer-explained}

\end{document}

%% file: macros.tex
\newcommand{\eps}{\varepsilon}
\newcommand{\cX}{\mathcal{X}}
\newcommand{\cO}{\mathcal{O}}
\newcommand{\tD}{\tilde{D}}
\newcommand{\tO}{\tilde{O}}
\newcommand{\tx}{\tilde{x}}
\newcommand{\alg}{\mathcal{A}}
\newcommand{\newalg}{\tilde{\mathcal{A}}}
\newcommand{\cost}{\mathrm{cost}}

\newcommand{\sign}{\mathrm{sgn}}
\newcommand{\cF}{\mathcal{F}}
\newcommand{\cC}{\mathcal{C}}
\newcommand{\cFcoreset}[2]{\mathcal{F}^{{#2}\mathrm{\text{-}means}}_{#1}}
\newcommand{\tcF}{\tilde{\cF}}
\newcommand{\bB}{\mathbb{B}}
\newcommand{\cFthreesat}{\cF^{\mathsf{3Disj}}}
\newcommand{\R}{\mathbb{R}}
\newcommand{\N}{\mathbb{N}}
\newcommand{\bzero}{\mathbf{0}}
\newcommand{\bone}{\mathbf{1}}
\newcommand{\E}{\mathbb{E}}
\newcommand{\Paren}[1]{\left(#1\right)}

\newcommand{\bi}{\mathbf{i}}
\newcommand{\bs}{\mathbf{s}}

%% file: app-promise-sanitizer-explained.tex
\newcommand{\Gen}{\mathtt{Gen}}
\newcommand{\Sign}{\mathtt{Sign}}
\newcommand{\Ver}{\mathtt{Ver}}
\newcommand{\sk}{\mathsf{sk}}
\newcommand{\vk}{\mathsf{vk}}
\newcommand{\arr}{\mathtt{R}}
\newcommand{\enc}{\mathtt{Enc}}
\newcommand{\dec}{\mathtt{Dec}}
\newcommand{\tn}{\tilde{n}}

\section{Hardness from \cite{UllmanV20} in terms of Promise Sanitizer} \label{app:promise-sanitizer}

In this section, we briefly explain how to interpret the hardness 
result of Ullman and Vadhan \cite{UllmanV20} in terms of a promise sanitizer (\Cref{thm:3sat-hardness}).  

Informally, the proof is by contradiction, assuming that an efficient DP promise sanitizer exists.  The 
idea is to construct a dataset by taking a set of valid digital signatures and encoding them using a PCP of the signature's verification circuit.  Now, if the sanitizer can produce a synthetic dataset that satisfies a good fraction of the PCP predicates, the PCP's soundness guarantee means we can decode a valid signature from the synthetic dataset.  But, this yields a contradiction: 
either the decoded signature is new, which would violate 
the security of the super-secure signature scheme, or it 
matches one of the original signatures, which would violate 
the DP guarantees.

To avoid repeating their proof verbatim, we will only sketch the high-level arguments here.

\subsection{Background}

We review a couple of key tools required for the construction of \cite{UllmanV20}.

\paragraph{Super-Secure Digital Signature Scheme.}
A \emph{Digital Signature Scheme (DSS)} consists of three polynomial-time algorithms:
\begin{itemize}
\item $\Gen$ takes in the security parameter $1^{\kappa}$ and output a secret key $\sk$ and a verification key $\vk$,
\item $\Sign$ takes in\footnote{We use a ``no-message'' version of the signature scheme, which can be initiated from the standard notion by setting the message $m$ to be fixed (e.g., empty string) for every signature.} a secret key $\sk$ and outputs a (random) signature $\sigma$,
\item $\Ver$ takes in the verification key $\vk$ and a signature $\sigma$. For any valid signature $\sigma$ produced by $\Sign$, $\Ver(\vk, \sigma)$ always accepts (i.e., returns 1).
\end{itemize}

The security guarantee is that, an adversary that is given the verification key $\vk$ together with a set $\Sigma$ of any polynomially large number of signatures, cannot produce a new signature $\sigma^* \notin \Sigma$ that is accepted by $\Ver$ with at least $1/n^{\Omega(1)}$ probability.

It is known that super-secure digital signature schemes exist, assuming one-way functions~\cite{Goldreich2004}.

\paragraph{Probabilistic Checkable Proofs.} Another ingredient required in the construction is the Probabilistic Checkable Proofs (PCPs) under Levin reductions. We state this only for Max-3SAT. 

PCPs with $\cFthreesat$ predicates consist of three (deterministic) algorithms:
\begin{itemize}
\item $\arr$ takes in the circuit $C$ and outputs a set $\tcF \subseteq \cFthreesat_d$ of predicates
\item $\enc$ takes in any circuit $C$ and an input $y$ for the circuit $C$, and produces $\pi \in \{0, 1\}^d$,
\item $\dec$ takes in $\pi \in \{0, 1\}^d$ and $C$ and produces an input $y'$ for the circuit $C$.
\end{itemize}

The guarantees are as follows:
\begin{itemize}
\item (Completeness) If $C(y) = 1$, then $\psi(\enc(y, C)) = 1$ for all $\psi \in \tcF$,
\item (Soundness) If $\E_{\psi \sim \tcF}[\psi(\pi)] \geq \gamma$, then $C(\dec(\pi, C)) = 1$.
\end{itemize}

In the papers that originally proved the PCP theorem~\cite{AroraLMSS98,AroraS98}, it was also shown that PCPs of the above form exist for some constant $\gamma \in (0, 1)$. 

\subsection{The Construction and Hardness Argument}

Formally, the construction of \cite{UllmanV20} (which builds on \cite{dwork2009complexity}) works as follows:
\begin{itemize}
\item Let $\sk, \vk$ be the key pair generated by $\Gen$.
\item Let $\sigma_1, \dots, \sigma_n$ be outputs from $n$ runs of $\Sign(\sk)$.
\item Let $C_{\vk}$ denote the circuit for $\Ver(\vk, \cdot)$.
\item Let $\tcF = \arr(C_{\vk})$.
\item For $i \in [n]$, let $\pi_i = \enc(\sigma_i, C_{\vk})$.
\item Output the dataset $D = (\pi_1, \dots, \pi_n)$ together with $\tcF$.
\end{itemize}

Now, suppose for the sake of contradiction that there is an $(\eps, 1/n^{\omega(1)})$-DP $(\cFthreesat, \gamma)$-promise sanitizer that runs in polynomial time. Then, we can run the sanitizer on $D$ to produce a synthetic dataset $\tD = (\tx_1, \dots, \tx_{\tn})$. Now, from the completeness of the PCP, we have
$\E_{\psi \sim \tcF}\left[\psi(D)\right] = 1$. Thus, by the guarantee of the promise sanitizer, with probability $2/3$, we have $\E_{\psi \sim \tcF}\left[\psi(\tD)\right] \geq \gamma$. When this happens, there must exist $\tx_i \in \tD$ such that $\E_{\psi \sim \tcF}\left[\psi(\tx_i)\right] \geq \gamma$. Finally, we can run $\dec(\tx_i, C_\vk)$ to get a new signature $\sigma^*$. 

By the guarantee of the PCP, we have $C_{\vk}(\sigma^*) = 1$, i.e., $\Ver(\vk, \sigma^*) = 1$. Now, one of the following two cases can happen: (i) if $\sigma^* \notin \{\sigma_1, \dots, \sigma_n\}$, then we violate the security of the signature scheme, or (ii) if $\sigma^* \in \{\sigma_1, \dots, \sigma_n\}$, then $\pi^* = \enc(\sigma^*, C_{\vk})$ belongs to the input dataset and we violate $(\eps, 1/n^{\omega(1)})$-DP.